\newtheorem{theorem}{Theorem}
\begin{document}
\title{Effect of Strong Time-Varying Transmission Distance on LEO Satellite-Terrestrial Deliveries}
\author{Yuanyuan Ma,~Tiejun~Lv,~\IEEEmembership{Senior Member,~IEEE}, Tingting Li,~Gaofeng Pan,\\
\IEEEmembership{Senior Member,~IEEE},~Yunfei~Chen,~\IEEEmembership{Senior Member,~IEEE},~and~Mohamed-Slim Alouini,~\IEEEmembership{Fellow,~IEEE}
\thanks{Manuscript received November 8, 2021; revised March 31, 2022, and May 30, 2022; accepted June 8, 2022. This work was supported  by the National Natural Science Foundation of China under Grant 62171031. \emph{(Corresponding author: Tiejun Lv)}.}
\thanks{Y. Ma and T. Lv are with the School of Information and Communication Engineering, Beijing University of Posts and Telecommunications, Beijing 100876, China (e-mail: e-mail: \{mayuan, lvtiejun\}@bupt.edu.cn).}
\thanks{T. Li is with the School of Mathematics and Statistics, Southwest University, Chongqing 400715, China.}
\thanks{G. Pan is with the School of Cyberspace Science and Technology, Beijing Institute of Technology, Beijing 100081, China.}
\thanks{Yunfei Chen is with the School of Engineering, University of Warwick, Coventry CV4 7AL, U.K.}
\thanks{Mohamed-Slim Alouini is with the Computer, Electrical, and Mathematical Sciences and Engineering Division, King Abdullah University of Science and Technology (KAUST), Thuwal 23955-6900, Saudi Arabia.}
}
\maketitle
\begin{abstract}
In this paper, we investigate the effect of the strong time-varying transmission distance on the performance of the low-earth orbit (LEO) satellite-terrestrial transmission (STT) system.  We propose a new analytical framework  using finite-state Markov channel (FSMC) model and time discretization method. Moreover, to demonstrate the applications of the proposed framework, the performances of two adaptive transmissions, rate-adaptive transmission (RAT) and power-adaptive transmission (PAT) schemes,   are evaluated  for the cases when the transmit power or the transmission rate at the LEO satellite is fixed. Closed-form expressions for the throughput, energy efficiency (EE), and delay outage rate (DOR) of the considered systems are derived and verified, which are capable of addressing the capacity, energy efficiency, and outage rate performance of the considered LEO STT scenarios with the proposed analytical framework.

\end{abstract}
\begin{IEEEkeywords}
delay outage rate (DOR), energy efficiency (EE), finite-state Markov channel (FSMC), low-earth orbit (LEO) satellite, throughput, time variability.
\end{IEEEkeywords}

\section{Introduction}
\global\long\def\figurename{Fig.}

Satellite communication is becoming an important technology in the beyond five-generation (B5G) and six-generation (6G) systems to support the exponentially increasing data requirement and variety of users across the world, since it has been widely applied in mass broadcasting, navigation, and disaster relief operations for its capability of seamless connectivity and wide coverage  \cite{Ye1,6G1,6G2,Ye4,9760009}.

Among various research topics, satellite-terrestrial transmission (STT) system has attracted a significant amount of attention on its performance analysis, such as capacity \cite{multisat_multibeam,CDMA_cellular,Lin_BF_rate,An_add_re3}, energy efficiency (EE) \cite{EE_SIOT,EE_ref_Z_Lin2},  outage probability (OP) \cite{Guo_SOP,DVB-S2,Pan_3,An_add_re3}, and coverage probability (CP) \cite{Tian_Yu}. Most of these works have considered simple satellite-terrestrial systems, in which the transmission between satellites and ground users is direct without the assistance of ground relays. For instance, the authors of \cite{multisat_multibeam} analyzed the system capacity of the forward link of a code-division multiple access system.  The capacity of mobile satellite systems was shown in \cite{CDMA_cellular} with the assistance of adaptive power control for the attenuation caused by fading. The authors of \cite{EE_SIOT} studied the EE in satellite-based internet of things system with a proposed network coding hybrid automatic repeat request (HARQ) scheme. Ref. \cite{EE_ref_Z_Lin2} investigated the EE of the earth station  in the rate-splitting multiple access-based cognitive satellite-terrestrial networks in the presence of multiple eavesdroppers. Closed-form expressions for the average secrecy capacity and secure outage probability of the multiuser downlink wiretap satellite network were derived in \cite{Guo_SOP}. The OP was investigated in non-orthogonal multiple access (NOMA) based cooperative STT system \cite{An_add_re3,Add_re1_4,Add_re1_6}. The OP of a broadcasting satellite communication system and that of a cooperative satellite-aerial-terrestrial system were studied in \cite{DVB-S2,Pan_2}. The outage performance and diversity gain of three HARQ schemes were  respectively studied in \cite{Pan_3}.
\cite{Tian_Yu} derived the CP of a dual-hop cooperative satellite-unmanned aerial vehicle communication system.
The aforementioned works considered the impacts of the small-scale fading and/or the randomness of terrestrial position on satellite-terrestrial transmissions. Some literatures employed GEO satellite, in which the transmission distance can be viewed as fixed. However, \cite{An_add_re3,DVB-S2} employed LEO satellite, in which the impacts of time-varying transmission distance was ignored.

Low-earth orbit (LEO) satellites normally orbit hundreds of kilometers above the earth and LEO satellite communication systems have recently gained great research interest due to less power consumption, lower transmission delays, and higher data rates, which are expected to be incorporated in future wireless networks \cite{LEO_6,LEO_8,LEO,Add_re1_1,Add_re1_2}. Similar to traditional terrestrial wireless communication systems, they also suffer from two types of channel fading, namely, small-scale fading and large-scale fading. A shadowed Rice (SR) model \cite{Rice,satellite-link,Add_re1_3,Add_re1_5} is widely adopted for small-scale fading over LEO STT links, which describes the statistical distribution of the channel gain between the LEO satellite and terrestrial terminals. Furthermore, as pointed out in \cite{Yejia}, a notable characteristic of large-scale fading in LEO STT systems is time variability, due to the high mobility of the LEO satellites. Specifically, the strong time-varying transmission distance has a huge impact on the large-scale fading. However, till now, few works have analyzed the influence of strong time-varying transmission distance on the performance of the LEO STT system.


On the other hand, network resources can be adaptively allocated based on the varying channel conditions, for improved/optimal efficient resource configuration and management \cite{DBLP}. Ref. \cite{ACM_Sate_EHF} proposed a fully rate adaptive technique for use at extremely high-frequency bands which experience a high rate of change of rain fade levels. An energy-efficient adaptive transmission scheme for the integrated satellite-terrestrial network constraints has been proposed in \cite{EE_adaptive}. Ref. \cite{ACM_multibeam} proposed a cross-layer design with rate-adaptive transmission for the broadcast channel of an interactive multibeam broadband satellite system with a transparent architecture. Ref. \cite{Power_Control_Cognitive_Satellite} investigated the distributed power control problem with adaptive learning algorithm in downlink cognitive satellite-terrestrial networks. Moreover, the authors of \cite{adaptive} analyzed the channel capacity of a hybrid satellite-terrestrial network with different adaptive transmission schemes including adaptive transmit power and adaptive rate schemes.

However, the study of large-scale fading with LEO satellite  usually fails to  disregard the strong time variability \cite{An_add_re3,DVB-S2}, and the investigation of  the  small-scale fading  always ignore the practicability brought by quantification or  relationship between different states \cite{Add_re1_5}.
Motivated by these observations,  we propose a new analytical framework to study the influences of the strong time variability on the LEO STT link, coming from the large-scale fading and the small-scale fading  in time domain and  amplitude domain with time discretization method and the adopted FSMC model.  The throughput, EE, and delay outage rate (DOR) of the considered LEO STT system are analyzed, while considering two adaptive transmission schemes, rate-adaptive transmission (RAT) and power-adaptive transmission (PAT). The main contributions of this paper are summarized as follows:

\begin{itemize}
\item We propose a new analytical framework to accurately  analyze the performances of LEO STT systems, in which the large-scale fading (or the path-loss) shows strong time-varying properties. Specifically, the time discretization method is employed to model the effect of time-varying path-loss over STT links, and the FSMC method is adopted to study the effect of the small-scale fading over LEO STT links.

\item Closed-form expressions for the throughput, EE, and DOR of the considered system are derived and verified.

\item The performances of two adaptive transmission schemes, RAT and PAT schemes, in the LEO STT system are demonstrated and compared.
\end{itemize}

The remainder of this paper is organized as follows.  The considered LEO STT system and performance metrics are introduced in Section II and III, respectively. The performance analysis with RAT and PAT schemes are respectively analyzed in Sections IV and V. In Section VI, numerical results are presented to show the performances with two adaptive transmission schemes. Finally, Section VII concludes the paper.

\section{System model}
The LEO STT system considered in this work is presented in Fig. \ref{system}.  All of the  LEO satellites have the same orbit height denoted as $H$.
Although multi-beam satellite systems have high spectrum efficiency \cite{wc8869710}, the intractable inter-beam interference  will be induced. For simplification of the analysis, the single-beam satellite antennas are adopted in the considered STT system.
The radius of the earth is ${R}_\mathrm{e}$. The terrestrial terminal is denoted as $\mathrm{T}$, and the LEO satellite is denoted as $\mathrm{S}$.
Assume that the overlapping areas of different satellites sharing the same frequency resources are small enough to ignore inter-satellite interferences\footnote{If the overlapping areas are large so that the inter-satellite interference cannot be ignored, the interference mitigation technique in \cite{Interfe_Miti} for spectral coexistence  between LEO satellites can be employed to remove the interferences.}.

Let $x(t)$ be the information bit transmitted by $\mathrm{S}$ in time slot $t$. Consider a frequency nonselective fading channel with additive white Gaussian  noise $z\left(t\right)$. The received signal at $\mathrm{T}$ in time slot $t$ can be written as
\begin{equation}
y\left(t\right)=\sqrt{{P_{\mathrm{T}}}\left(t\right)/{d_{\mathrm{\mathrm{TS}}}^{\rho}(t)}}h_{\mathrm{TS}}\left(t\right)x\left(t\right)+z\left(t\right),
\end{equation}
where $d_{\mathrm{TS}}(t)$ is the distance between
$\mathrm{S}$ and $\mathrm{T}$, $\rho$ is the
path-loss factor, ${d_{\mathrm{\mathrm{TS}}}^{\rho}(t)}$ is the large-scale fading,
$P_{\mathrm{T}}(t)$ is the transmit power at ${\mathrm{S}}$, $h_{\mathrm{TS}}(t)$ is the small-scale fading channel gain between $\mathrm{T}$ and $\mathrm{S}$.

The received signal to noise ratio (SNR) at $\mathrm{T}$ is
\begin{equation}
\gamma(t)=\frac{P_{\mathrm{T}}(t)}{\sigma^{2}}\frac{|h_{\mathrm{TS}}(t)|^{2}}{d_{\mathrm{\mathrm{TS}}}^{\rho}(t)},
\end{equation}
where $\sigma^{2}$ is the average power of $z\left(t\right)$, the power gain $|h_{\mathrm{TS}}(t)|^{2}$ is from the small-scale fading and the pass loss $d_{\mathrm{\mathrm{TS}}}^{\rho}(t)$ is from large-scale fading/path-loss.

\begin{figure}
\begin{centering}
\includegraphics[width=9cm]{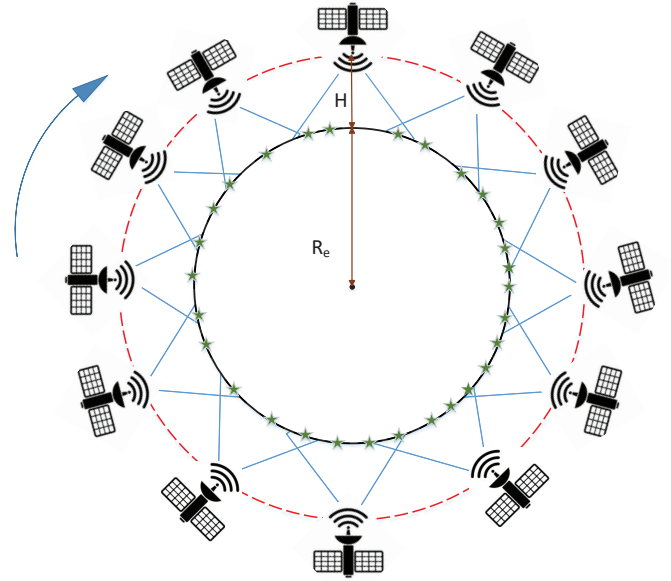}
\par\end{centering}
\caption{Adopted LEO STT system model: Several LEO satellites circling the earth make up an annular covering belt which afford uninterrupted service for the terrestrial terminal}
\label{system}
\end{figure}

\subsection{Channel Model}
 In this work, we consider the impacts of two types of channel fading in LEO STT system: the small-scale and large-scale fading. The SR model \cite{Rice,satellite-link} is adopted to describe the small-scale fading, which has been proved to be an accurate, practical and applicable tool to evaluate the performance of the satellite propagation environments in various frequency bands. Without loss of generality, the probability density function (PDF) of the power gain, $G\left(t\right)=|h_{\mathrm{TS}}\left(t\right)|^{2}$, for the LEO STT links in SR
  fading is given as  \cite{Rice}
\begin{align}\label{f_fading}
f_{\mathrm{G}}(y,t)=\alpha(t)\exp(-\beta(t) y)_{1}F_{1}\left(m(t);1;\delta(t) y\right),y\geq0 ,
\end{align}
where $\alpha(t)=\left(\frac{2b_{0}(t)m(t)}{2b(t)m(t)+\varOmega(t)}\right)^{m(t)}/(2b_{0}(t))$, $\beta(t)=\frac{1}{2b_{0}(t)},$ and $\delta(t)=\frac{\varOmega(t)}{2b_{0}(t)(2b_{0}(t)m(t)+\varOmega(t))}$, $\varOmega(t)$ and $2b_{0}(t)$ are are the average power of the line of sight (LOS) and multi-path components at time $t$, respectively, $m(t)$ is the fading severity parameter, and $_{1}F_{1}\left(\cdot;\cdot;\cdot\right)$ is the confluent hypergeometric function of the first kind \cite[Eq. (9.21)]{Gradshteyn}.

Expanding $_{1}F_{1}\left(\cdot;\cdot;\cdot\right)$ in  \eqref{f_fading} by using \cite[Eq. (9.210.1)]{Gradshteyn}, the PDF of $G\left(t\right)$ can be presented as

\begin{align}\label{pdf_G}
f_{\mathrm{G}}\left(x,t\right)=\alpha(t)\sum_{k=0}^{m(t)-1}\varsigma\left(\mathrm{\mathit{k,t}}\right)x^{k}\exp\left(-\left(\beta(t)-\delta(t)\right)x\right)
,
\end{align}
where $\varsigma\left(\mathrm{\mathit{k,t}}\right)=\frac{\left(-1\right)^{k}\left(1-m(t)\right)_{k}\delta(t)^{k}}{(k!)^{2}}$
and $\left(w\right)_{k}=w\left(w+1\right)\cdots\left(w+k-1\right)$
is the Pochhammer symbol \cite{Gradshteyn}.

By substituting \eqref{pdf_G} into $F_{\mathrm{G,t}}\left(x\right)=\int_{0}^{x}f_{\mathrm{G}}\left(y,t\right)dy$, the cumulative distribution function (CDF) of $G\left(t\right)$  is obtained as
\begin{align}\label{f_fading_inte}
F_{\mathrm{G}}\left(x,t\right)  =& 1-\alpha(t)\sum_{k=0}^{m(t)-1}\varsigma\left(\mathrm{\mathit{k,t}}\right)\sum_{p=0}^{k}\frac{k!}{p!}x^{p}
\notag\\
&\times
\frac{\exp(-\left(\beta(t)-\delta(t)\right)x)}{\left(\beta(t)-\delta(t)\right)^{k+1-p}}.
\end{align}

\subsection{Finite-State Markov Channel (FSMC) Model }
Here we adopt the FSMC model \cite{finite-state-Markov} to study the effect  of the small-scale fading over LEO STT links.
To the randomness of the small-scale fading of the LEO STT link, we partition the channel gain  between $\mathrm{T}$ and $\mathrm{S}$, $h_{\mathrm{TS}}$, into $K$ regions, $h_{k}\in[\mu_{k-1},\mu_{i})$, $i=1,2,\cdots,K$, where the region boundaries, $\mu_{i}$, denoting small fading amplitude partition thresholds, are set to the minimum channel gain value required to achieve a target bit error rate (BER) value \cite{Goldsmith2005Wireless}, with $\mu_{0}=0$ and $\mu_{K}=\infty$. The number of FSMC states is  $K$. The FSMC is said to be in state $S=k$ if the channel gain $h_{\mathrm{TS}}$ falls in the region $h_{k}\in[\mu_{k-1},\mu_{k})$.

It is assumed that at time index $l$, the state of the channel is   $s_{k}=k$, $k\in\left\{ 0,\cdots,K-1\right\} $. The channel state is the output of a Markov chain. The steady-state probability of each FSMC state is derived by integrating the PDF of $G\left(t\right)$, over the corresponding region as

\begin{align}\label{pi_kt_def}
\mathbf{\pi}_{k}(t) &= \int_{\mu_{k-1}^2}^{\mu_{k}^2}f_{{\mathrm{G}}}(x,t)dx\notag\\
& =\begin{cases}
F_{\mathrm{G}}\left({\mu_{k}^{2}},t\right)-F_{\mathrm{G}}\left({\mu_{k-1}^{2}},t\right), & k= 1,\cdots,K-1; \\
1-F_{\mathrm{G}}\left({\mu_{k-1}^{2}},t\right), & k=K
\end{cases},
\end{align}
where $f_{h_{\mathrm{TS}}}(x,t)$ is the the analytical PDF of the channel fading envelope. For stationary Markov models, the stationary state probability vector is denoted as $\boldsymbol{\pi}=[\pi_{1}(t),\cdots,\pi_{K}(t)]$.

\subsection{Time Discretization}
The strong time-varying transmission distance exhibits a nonnegligible effect on the path-loss because of the orbiting of the LEO satellite. To describe the time-varying property of the LEO STT channel, we need to partition the service time of $\mathrm{S}$ into time slots. Then, the influence of the strong time variability of the path-loss on the LEO STT link can be elaborated via employing the time discretization method. For clarity, we present the coverage area of one satellite in Fig. \ref{coverage}, in which the coverage area of $\mathrm{S}$ is a circle with a radius $\mathrm{R}$.

To facilitate the following analysis, we denote the sub-satellite point on earth as $\mathrm{O}$, the track of sub-satellite point as $\mathrm{TR_{-}O}$, the projection point of $\mathrm{T}$ on $\mathrm{TR_{-}O}$ as $\mathrm{P}$, the distance between $\mathrm{T}$ and $\mathrm{TR_{-}O}$ as $d_{\mathrm{P}}$, and the distance between $\mathrm{T}$ and $\mathrm{S}$ as $d_{\mathrm{TS}}$. We use $d_{\mathrm{P}}$ to describe the location of $\mathrm{T}$. Moreover, it is also assumed that $d_\mathrm{P}$ is fixed when $\mathrm{T}$ is in the coverage area of each LEO satellite, or $\mathrm{T}$ is fixed. The distance between $\mathrm{S}$  and $\mathrm{T}$ is calculated in Appendix A.

\begin{figure}
\begin{centering}
\includegraphics[width=7cm]{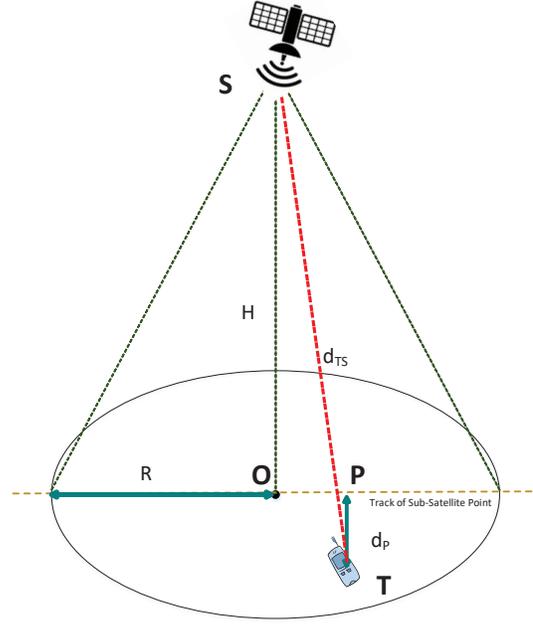}
\par\end{centering}
\caption{Coverage area of the LEO Satellite}
\label{coverage}
\end{figure}
When the terminal is in the service area of the LEO satellite, the channel gain and distance between $\mathrm{T}$ and $\mathrm{S}$ vary with the time. The  time when $\mathrm{T}$ is within the coverage area of each LEO satellite is denoted as $T_{\mathrm{\mathrm{s}}}$. We define the slot duration as $T_{\mathrm{slot}}$. Then, the total duration can be divided into $N$ slots as $N=\frac{T_{\mathrm{s}}}{T_{\mathrm{slot}}}$, where $N$ is an integer. In each slot, we assume the channel gain $|h_{\mathrm{TS}}|$ belongs to one $h_{k}\overset{\triangle}\in [\mu_{k-1},\mu_{k})$, $k\in\left\{ 1,\cdots,K\right\} $. Similarly, $d_{\mathrm{TS}}[n]$ is denoted as the distance between $\mathrm{S}$ and $\mathrm{T}$ in the $n$-th time slot and $d_{\mathrm{TS}}[n]\in \left[ \mathrm{min}\left\{ d_{\mathrm{TS}}\left(t\right)\right\},  \mathrm{max}\left\{ d_{\mathrm{TS}}\left(t\right)\right\} \right]$, in which ${\left(n-1\right)T_{\mathrm{slot}}}\leq t\leq{nT_{\mathrm{slot}}}$.

Considering the time-varying property of the LEO STT channel and   practicability of the considered LEO STT system simultaneously,
 we know the large-scale fading and the small-scale fading of the LEO STT system  all vary in time and  amplitude domains, and can be presented with  time discretization method and the adopted FSMC model. Next, the throughput, EE and DOR of the considered LEO STT system  will be analyzed with FSMC model and  time discretization method.

\section{Throughput, EE and DOR}

Using the channel gain discretization in the FSMC model and time discretization to describe the time variability of the path-loss,  the throughput and EE of the considered system are
\begin{equation}\label{rate_bar}
\bar{R}=\frac{1}{N}\sum_{n=1}^{N}\sum_{k=1}^{K}\pi_{k,n}R_{k,n},
\end{equation}
and
\begin{align}\label{EE_bar}
\eta_{\rm EE} =\frac{\sum_{n=1}^{N}\sum_{k=1}^{K}\pi_{k,n}R_{k,n}}{\sum_{n=1}^{N}\sum_{k=1}^{K}\pi_{k,n}P_{k,n}},
\end{align}
respectively,
where
$\pi_{k,n}$, $R_{k,n}$ and $P_{k,n}$ are the stationary state probability,  the data rate,  and the transmit power of ${\mathrm{S}}$, respectively,  when ${\mathrm{T}}$ is in the $n$-th time slot and the channel gain is $h_{k}$, in which
$\pi_{k,n}=\frac{1}{T_{\mathrm{slot}}}\int_{\left(n-1\right)T_{\mathrm{slot}}}^{nT_{\mathrm{slot}}}\pi_{k}(t)$.

DOR is defined as the probability that the time duration required to successfully deliver a certain amount of data  D over a wireless channel is greater than a threshold, denoted by $T_{\mathrm{th}}$. Given that $T_{\mathrm{th}}$ can be related to the latency requirement of the data, the DOR metric simultaneously characterizes the reliability and latency performances of transmission. DOR can be expressed as
\begin{align}\label{DOR}
{\mathrm{DOR}}=\Pr\left\{ \mathrm{DT}>T_{\mathrm{th}}\right\}.
\end{align}

The delivery time of a small data packet $D$ arriving at the time $t$  is denoted by ${\mathrm{DT}}(t)$.
According to the Markov model for the fading channel where channel gain can only move to the neighboring intervals, $h_{\mathrm{TS}}\left(t\right)$
enters region $h_{2}$ after the waiting period and rate $R_{2,m_t}$ is used to complete the data transmission, where
\begin{align}\label{m_equation}
m_t=\left\lceil \frac{T_{\mathrm{W}}+t}{T_{\mathrm{slot}}}\right\rceil \,\mathrm{MOD}\,N
\end{align}
indicating the time slot index after the waiting period and $m_t=1,2,\cdots,N$.
In other words, $m_t$ indicates the  large-scale fading with time slot index.
Hence, we have

\begin{align}\label{DT_DEFINE}
\mathrm{DT}(t)=\begin{cases}
\frac{D}{R_{k,m_t}}, & k=2,\cdots,K ;\\
T_{\mathrm{W}}+\frac{D}{R_{k,m_t}},& k=1
\end{cases},
\end{align}
in which $T_{\mathrm{W}}$ is the waiting period when the data packet arrives at time $t$. We adopt the Markov channel model \cite{Queueing,finite-state-Markov} and assume that, when a small data packet arrives at time $t$ , the channel gain stays in $h_{1}$ for an exponentially distributed period of  time with PDF

\begin{align}\label{f_wait_time}
f_{T_{\mathrm{W}}}(t)=\frac{1}{\lambda}\exp\left(-\frac{t}{\lambda}\right),
\end{align}
in which  $\lambda$ is  the average of $T_{\mathrm{W}}$. In other words, $\lambda$ is the average time period over which $h(t)$ stays below a given threshold $\mu_1$ per unit time.

Here we introduce second-order statistics, the level crossing rate (LCR), which is the rate at which the envelope crosses a certain threshold, and the envelope average fade duration (AFD), which is the length of the time that the envelope stays below a given threshold. From  \eqref{f_wait_time},  $\lambda$  is AFD and it can be expressed as

\begin{align}\label{ave_wait_time_define}
\lambda=\frac{1}{N_R({\mu_1})}\int_{0}^{{\mu_1}^2}f_{\mathrm{G}}(y)dy=\frac{\pi_{1}}{N_R({\mu_1})},
\end{align}
in which $N_R({\mu_1})$ is LCR, the rate at which the envelope of $h(t)$ crosses a certain threshold ${\mu_1}$.

In this paper, we consider the small-scale fading as the slowly varying LOS and non isotropic scattering correlation model. We denote $f_{\mathrm{max}}^{\mathrm{scatter}}$ and $f_{\mathrm{max}}^{\mathrm{LOS}}$  as the maximum Doppler frequency of the scattering component and the maximum Doppler frequency of the LOS component. Empirical observations have shown that the rate of change of the LOS component (several Hertz), is significantly less than that of the scattering component (several hundred Hertz)\cite{kappa_value}, which means $\frac {f_{\mathrm{max}}^{\mathrm{LOS}}}{f_{\mathrm{max}}^{\mathrm{scatter}}}{\ll}1$. Thus, \cite{Rice} gives LCR $N_R({r_{th}})$ as

\begin{align}
N_R({r_{th}})=\frac{1}{\sqrt{2\pi}\Gamma\left(m\right)}\left(\frac{2b_{0}m}{2b_{0}m+\Omega}\right)^{m}\sqrt{\frac{b_{0}b_{2}-b_{1}^{2}}{b_{0}}}\frac{r_{th}}{b_{0}}\notag\\
\times\exp\left(-\frac{\mu_{k}^{2}}{b_{0}}\right)\sum_{n=0}^{\infty}\frac{\left(\frac{1}{2}\right)_{n}\left(-1\right)^{n}}{n!}\left[\xi_{n}\left({r_{th}}\right)+\xi_{n+1}\left({r_{th}}\right)\right],
\end{align}
where $\left(x\right)_{n}=x\left(x+1\right)\cdots\left(x+n-1\right)$,
$\left(x\right)_{0}=1$, $b_{1}=b_{0}2\pi f_{\mathrm{max}}^{\mathrm{scatter}}\cos(\bar{\phi})I_{1}(\kappa)/I_{0}(\kappa)$, $b_{2}=b_{0}2\pi^{2}f_{\mathrm{max}}^{\mathrm{scatter}^{2}}[I_{0}(\kappa)+\cos(\bar{\phi})I_{2}(\kappa)]/I_{0}(\kappa)$,
, $\bar{\phi}\in\left[-\pi\right. \left.,\pi\right)$ is the mean direction of the angle of arrival (AOA) in the horizontal plane, and $\kappa$ is the width control parameter of the AOA \cite{AOA}, $I_{0}(\cdot)$ is the modified Bessel function of zeroth order and

\begin{align}
\xi_{n}\left({r_{th}}\right)=&\frac{\Gamma\left(n+m\right)}{2^{n}n!}\left[\frac{b_{1}^{2}}{b_{0}\left(b_{0}b_{2}-b_{1}^{2}\right)}\right]^{2}\left(\frac{2b_{0}\Omega}{2b_{0}m+\Omega}\right)^{n} \notag\\
& \times {}_{1}F_{1}\left(n+m,n+1,\frac{\Omega{r_{th}}^{2}}{2b_{0}\left(2b_{0}m+\Omega\right)}\right),
\end{align}
in which  $_{1}F_{1}\left(\cdot;\cdot;\cdot\right)$ is the confluent hypergeometric function of the first kind \cite[Eq. (9.21)]{Gradshteyn}.


To comprehensively reflect the time-varying property of the LEO STT channel, in the following two sections we will demonstrate the performances of two adaptive transmission schemes, RAT and PAT  for the cases when the transmit power or the transmission rate at the LEO satellite is fixed.  Closed-form expressions for the throughput, EE, and DOR of the considered LEO STT system will be derived in each adaptive transmission scheme.
Besides, we assumed that
Doppler frequency shift, caused by the mobility of the satellite, can be estimated perfectly and mitigated by the mature
precompensation method \cite{Add_re1_5,Doppler_simu_estimate}.
Moreover, we assume that the channel is perfectly estimated with pilot signals \cite{sat_perfect} at the terrestrial terminal and known to the terrestrial terminal, through a noiseless feedback channel. It means that the LEO satellite has full channel state information (CSI) to implement adaptive transmission\footnote{
Due to the propagation delay between the LEO satellite and the terrestrial terminal, we can obtain the outdated CSI in practice \cite{outdated_CCI_OutCSI}.
It is obvious that the propagation delay in LEO  STT  is relatively short compared to the slot duration used in time discretization.
According to the FSMC model and the lower and upper bounds of the received SNR based on the region boundaries of CSI, the outdated CSI can be regarded as instantaneous CSI with acceptable error.}.  The impact of imperfect CSI is out of the scope of this work and the reader interested in this issue can refer to Refs. \cite{imp_csi1,imp_csi2} for further information.

\section{RATE-ADAPTIVE TRANSMISSION (RAT)}
We now consider a practical discrete-rate adaptive transmission with AMC \cite{Goldsmith2005Wireless} where the transmit power at $\mathrm{S}$ is fixed. In the following analysis, closed-form expressions for the throughput, EE, and DOR of the considered LEO STT system are derived.

\subsection{Throughput}
To ensure high transmission reliability, we assume that no transmission occurs when $\gamma<\gamma_{\mathrm{min}}$. The delivery time of a small data packet and the data rate with AMC depends on the received
SNR.

According to the aforementioned model, the  range of the received SNR is divided into $K$ regions in each time slot similar to the channel gain. The region boundaries are set to the minimum SNR value required by the selected modulation and coding scheme to achieve a target BER value $\mathrm{BER_{tar}}$ \cite{Goldsmith2005Wireless}. To achieve the throughput, we set
\begin{equation}\label{mu_1}
\mu_{1}=\sigma\sqrt{\frac{\gamma_{\mathrm{min}}d_{\mathrm{max}}^{\rho}}{P_{\mathrm{T}}}}.
\end{equation}

When the channel gain falls in the region $h_{k}$ and $\mathrm{T}$ is in the $n$-th time slot, the lower and upper bounds of received SNR are

\begin{equation}\label{gamma_k_n_L}
\gamma_{k,n}^{\mathrm L}=\begin{cases}
0, & k=1;\\
\frac{P_{\mathrm{T}}}{\sigma^{2}}\frac{\mu_{k-1}^{2}}{d_{\mathrm{TS}}^{\rho}\left[n\right]}, & k=2,\cdots,K
\end{cases}
,\end{equation}
and
\begin{equation}\label{gamma_k_n_U}
\gamma_{k,n}^{\mathrm U}=\begin{cases}
0, & k=1;\\
\frac{P_{\mathrm{T}}}{\sigma^{2}}\frac{\mu_{k}^{2}}{d_{\mathrm{TS}}^{\rho}\left[n-1\right]}, & k=2,\cdots,K
\end{cases},
\end{equation}
respectively.

The lower and upper bounds of data rate are
 \begin{equation}
R_{k,n}^{\mathrm{RAT,L}}=B\log_{2}\left(1+\gamma_{k,n}^{L}\right)
,\end{equation}
and
 \begin{equation}
R_{k,n}^{\mathrm{RAT,U}}=B\log_{2}\left(1+\gamma_{k,n}^{U}\right),
\end{equation}
respectively, in which $n=1,\cdots,N$.
 Based on \eqref{pi_kt_def} and time discretization method, the probability of using $R_{k,n}^{\mathrm{RAT}}$ is given by
\begin{align}
\pi_{k,n}&=\mathbf{Pr}\left\{ h_{\mathrm{TS}}\in h_{k}\right\}
 \notag\\&
=F_{\mathrm{G}}\left(\mu_{k}^{2},t\right)-F_{\mathrm{G}}\left(\mu_{k-1}^{2},t\right),
\end{align}
where $F_{\mathrm{G}}\left(\mu_{k},t\right)$ is expressed as \eqref{f_fading} with non-integer $m(t)$ and \eqref{f_fading_inte} with integer $m(t)$.

Based on \eqref{rate_bar}, the lower and upper bounds of the average throughput  with RAT scheme are
\begin{equation}
\bar{R}_{1}^{\mathrm L}=\frac{1}{N}\sum_{n=1}^{N}\sum_{k=1}^{K}\pi_{k,n}R_{k,n}^{\mathrm{RAT,L}}
,\end{equation}
and
\begin{equation}
\bar{R}_{1}^{\mathrm U}=\frac{1}{N}\sum_{n=1}^{N}\sum_{k=1}^{K}\pi_{k,n}R_{k,n}^{\mathrm{RAT,U}},
\end{equation}
respectively.

\subsection{EE}
First, we will analyze the average power consumption $\bar{P}_{1}$ with RAT. As no transmission occurs when $\gamma<\gamma_{\mathrm{min}}$ to ensure high transmission reliability, the transmit power at the LEO satellite is zero when  $\gamma<\gamma_{\mathrm{min}}$ and is $P_{\mathrm{T}}$ when  $\gamma>\gamma_{\mathrm{min}}$. Let $P_{k,n}^\mathrm{RAT}$ be the transmit power of ${\mathrm{S}}$ when ${\mathrm{T}}$ in the $n$-th time slot and the channel gain stays as $h_{k}$, $k\in\left\{ 1,\cdots,K\right\} $.

\begin{equation}\label{P_kn_RAT}
P_{k,n}^{\mathrm{RAT}}=\begin{cases}
0, & \gamma<\gamma_{\mathrm{min}};\\
P_\mathrm{T}, & \mathrm{else}
\end{cases}.
\end{equation}

Combining \eqref{mu_1} and \eqref{P_kn_RAT}, $P_{k,n}^{\mathrm{RAT}}$ can be simplified as
\begin{align}
P_{k,n}^{\mathrm{RAT}}=\begin{cases}
0, & k=1;\\
P_\mathrm{T}, & k=2,\cdots,K
\end{cases}.
\end{align}

The average power consumption with RAT scheme is
\begin{align}
\bar{P}_{1} =\frac{1}{N}\sum_{n=1}^{N}\sum_{k=1}^{K}\pi_{k,n}P_{k,n}^{{\mathrm{RAT}}}\label{ave_P_RAT}.
\end{align}

Based on \eqref{EE_bar}, the lower and upper bounds of EE with RAT scheme are
\begin{align}
\eta_{{\rm EE}_{\mathrm{RAT}}}^{\mathrm L}&  =\frac{\sum_{n=1}^{N}\sum_{k=1}^{K}\pi_{k,n}R_{k,n}^{{\mathrm{RAT}},L}}{\sum_{n=1}^{N}\sum_{k=1}^{K}\pi_{k,n}P_{k,n}^{{\mathrm{RAT}}}}\label{EE_RAT}
,\end{align}
and
\begin{align}
\eta_{{\rm EE}_{\mathrm{RAT}}}^{\mathrm U} &  =\frac{\sum_{n=1}^{N}\sum_{k=1}^{K}\pi_{k,n}R_{k,n}^{{\mathrm{RAT}},U}}{\sum_{n=1}^{N}\sum_{k=1}^{K}\pi_{k,n}P_{k,n}^{{\mathrm{RAT}}}}\label{EE_RAT},
\end{align}
respectively.

\subsection{DOR}
The delivery time of a small data packet $D$ arriving at the time $t$  is denoted by ${\mathrm{DT}}_{\mathrm{R}}(t)$.
Based on \eqref{DT_DEFINE}, the probability mass function (PMF) of $\Pr\left\{ \mathrm{DT_{\mathrm{R}}}(t)\right\}$ can be written as

\begin{equation}
\Pr\left\{ \mathrm{DT_{\mathrm{R}}}(t)=\frac{D}{R_{k,m_t}^{\mathrm{RAT}}}\right\} =\pi_{k,m_t}, k=2,\cdots,K
,
\end{equation}
and

\begin{equation}
\Pr\left\{ \mathrm{DT_{\mathrm{R}}}(t)=T_{\mathrm{W}}+\frac{D}{R_{2,m_t}^{\mathrm{RAT}}}\right\} =\pi_{1,m_t}.
\end{equation}

Therefore, the CDF of $\mathrm{DT_{\mathrm{R}}}(t)$, denoted by $F_{\mathrm{DT_{\mathrm{R}}}(t)}(s),$ can be derived as

\begin{align}
F_{\mathrm{DT_{\mathrm{R}}}(t)}(s)=&\pi_{1,m_t}\Pr\left\{ T_{\mathrm{W}}+\frac{D}{R_{2,m_t}^{\mathrm{RAT}}}<s\right\} \notag\\
&+\sum_{k=2}^{K}\pi_{k,m_t}\mathcal{U}\left(s-\frac{D}{R_{k,m_t}^{\mathrm{RAT}}}\right),
\end{align}
where $\mathcal{U}(\cdot)$ denotes the unit step function. When the data packet arrives at the time $t$, the DOR with RAT scheme is denoted by $\mathrm{DOR}_{\mathrm{R}}(t)$,
which can be calculated as

\begin{align}\label{dor_def}
&\mathrm{DOR_{\mathrm{R}}}(t)=\Pr\left\{ \mathrm{DT_{\mathrm{R}}}>T_{\mathrm{th}}\right\}\notag\\
&=1-F_{\mathrm{DT_{\mathrm{R}}}(t)}(T_{\mathrm{th}})\notag\\
&=1-\pi_{1,m_t}\left[1-\exp\left(-\frac{T_{\mathrm{th}}-\frac{D}{R_{2,m_t}^{\mathrm{RAT}}}}{\lambda}\right)\right] \notag\\
&~~~~\times \mathcal{U}\left(T_{\mathrm{th}}-\frac{D}{R_{2,m_t}^{\mathrm{RAT}}}\right)-\sum_{k=2}^{K}\pi_{k,m_t}\mathcal{U}\left(T_{\mathrm{th}}-\frac{D}{R_{k,m_t}^{\mathrm{RAT}}}\right),
\end{align}
where $m_t$ is described in \eqref{m_equation} and
\begin{align}\label{lamda_RAT}
\lambda=\frac{F_{\mathrm{G}}\left(\sigma^2{\frac{\gamma_{\mathrm{min}}d_{\mathrm{max}}^{\rho}}{P_{\mathrm{T}}}}\right)}{N_R\left({\sigma\sqrt{\frac{\gamma_{\mathrm{min}}d_{\mathrm{max}}^{\rho}}{P_{\mathrm{T}}}}}\right)}.
\end{align}

\begin{theorem}\label{DOR1_theorem}
A closed-form expression for the average DOR with RAT scheme $\mathrm{DOR_{\mathrm{R}}}$ of the LEO satellite system is given by \eqref{DOR1_expression}, shown on the top of next page.
\begin{figure*}
  \centering
\begin{align}\label{DOR1_expression}
&\mathrm{DOR_{\mathrm{\mathrm{R}}}}=1-\frac{1}{N}\sum_{n=1}^{N}\sum_{k=2}^{M}\pi_{k,n}U\left(T_{\mathrm{th}}-\frac{D}{B\log_{2}\left(1+\frac{P_{\mathrm{T}}}{\sigma^{2}}\frac{\mu_{k-1}^{2}}{d_{\mathrm{TS}}^{\rho}[n]}\right)}\right)\notag\\
&-\frac{1}{N}\sum_{m_t=1}^{N}\pi_{1,m_t}\left[1-\exp\left(-\frac{1}{\lambda}\left(T_{\mathrm{th}}-\frac{D}{B\log_{2}\left(1+\frac{\gamma_{\mathrm{min}}d_{\mathrm{max}}^{\rho}}{d_{\mathrm{TS}}^{\rho}[m_t]}\right)}\right)\right)\right]\mathcal{U}\left(T_{\mathrm{th}}-\frac{D}{B\log_{2}\left(1+\frac{\gamma_{\mathrm{min}}d_{\mathrm{max}}^{\rho}}{d_{\mathrm{TS}}^{\rho}[m_t]}\right)}\right)
\end{align}
\rule{18cm}{0.01cm}
\end{figure*}
\end{theorem}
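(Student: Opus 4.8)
The plan is to obtain the average DOR by integrating the arrival-time-conditional expression \eqref{dor_def} over the packet arrival instant. The first observation is that, once we condition on the small-scale fading entering region $h_1$, the delivery time in \eqref{DT_DEFINE} is deterministic apart from the exponential waiting variable $T_{\mathrm W}$; integrating $T_{\mathrm W}$ against \eqref{f_wait_time} is precisely what produced the factor $1-\exp\!\left(-\tfrac{1}{\lambda}\bigl(T_{\mathrm{th}}-\tfrac{D}{R_{2,m_t}}\bigr)\right)$ and the companion indicator $\mathcal{U}\bigl(T_{\mathrm{th}}-\tfrac{D}{R_{2,m_t}}\bigr)$ in \eqref{dor_def}. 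Hence the only randomness left to average over is the arrival instant $t$, and it enters \eqref{dor_def} solely through the post-waiting slot index $m_t$ of \eqref{m_equation}, which indexes the large-scale fading/path-loss $d_{\mathrm{TS}}[m_t]$.

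Next I would make the AMC quantities explicit. For region $h_k$ with $k\ge 2$ the rate used is the lower bound $R_{k,n}^{\mathrm{RAT,L}}=B\log_2\!\bigl(1+\gamma_{k,n}^{\mathrm L}\bigr)$ with $\gamma_{k,n}^{\mathrm L}$ from \eqref{gamma_k_n_L}, so $\tfrac{D}{R_{k,n}^{\mathrm{RAT}}}$ becomes $D\big/\!\bigl(B\log_2(1+\tfrac{P_{\mathrm T}}{\sigma^2}\tfrac{\mu_{k-1}^2}{d_{\mathrm{TS}}^\rho[n]})\bigr)$; this is the conservative choice, which turns the identity of \eqref{DOR1_expression} into an upper bound on the true DOR. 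For the waiting branch $k=1$ the relevant rate is $R_{2,m_t}^{\mathrm{RAT,L}}$, and invoking the threshold choice \eqref{mu_1}, which gives $\tfrac{P_{\mathrm T}}{\sigma^2}\mu_1^2=\gamma_{\mathrm{min}}d_{\mathrm{max}}^\rho$, it reduces to $B\log_2\!\bigl(1+\tfrac{\gamma_{\mathrm{min}}d_{\mathrm{max}}^\rho}{d_{\mathrm{TS}}^\rho[m_t]}\bigr)$, which is exactly the argument appearing inside the exponential and the step function on the second line of \eqref{DOR1_expression}; finally $\lambda$ is replaced by the closed form \eqref{lamda_RAT} assembled from $F_{\mathrm G}$ and the LCR $N_R$. (The summation limit written $M$ in \eqref{DOR1_expression} is the number $K$ of SNR/AMC regions.)

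Then I would perform the averaging over $t$. Assuming the small data packet arrives uniformly within the satellite service window $[0,T_{\mathrm s})$, equivalently uniformly over the $N$ time slots, the index $m_t=\lceil (T_{\mathrm W}+t)/T_{\mathrm{slot}}\rceil\,\mathrm{MOD}\,N$ is uniform on $\{1,\dots,N\}$ by the periodicity of the reduction modulo $N$; therefore $\mathrm{DOR_{\mathrm{R}}}=\mathbb{E}_t[\mathrm{DOR_{\mathrm{R}}}(t)]=\tfrac1N\sum_{m_t=1}^{N}\mathrm{DOR_{\mathrm{R}}}(t)$, with the constant term passing through unchanged since $\tfrac1N\sum_{m_t=1}^{N}1=1$. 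Applying this separately to the two pieces of \eqref{dor_def} — the step-function sum $\sum_{k=2}^{K}\pi_{k,m_t}\mathcal{U}(\cdot)$ and the waiting term $\pi_{1,m_t}[1-\exp(\cdot)]\mathcal{U}(\cdot)$ — together with the substitutions above and a relabelling of the dummy slot index to $n$ in the first piece, collects exactly \eqref{DOR1_expression}.

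The step I expect to be the main obstacle is this averaging: $m_t$ nominally still depends on $T_{\mathrm W}$, which has already been integrated out inside \eqref{dor_def}, so one must argue that it is legitimate to treat $m_t$ as an independent uniform index over $\{1,\dots,N\}$. This rests on the $\mathrm{MOD}\,N$ periodicity and on the modelling assumption that the waiting time is short compared with the pass duration (so the shift by $T_{\mathrm W}$ does not bias the induced distribution of the slot index), together with the fact that within each slot the rate, the distance $d_{\mathrm{TS}}[\cdot]$, and the state probabilities $\pi_{k,n}$ are held constant by the time-discretization construction. Everything after that — substituting the region boundaries, the AMC rates, and \eqref{lamda_RAT} into \eqref{dor_def} and regrouping — is routine bookkeeping.
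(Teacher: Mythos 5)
Your proposal is correct and follows essentially the same route as the paper's Appendix B: averaging \eqref{dor_def} uniformly over the arrival time, treating $m_t$ as uniform on $\{1,\dots,N\}$ so the constant term integrates to 1, substituting $R_{2,m_t}^{\mathrm{RAT}}=B\log_{2}\bigl(1+\gamma_{\mathrm{min}}d_{\mathrm{max}}^{\rho}/d_{\mathrm{TS}}^{\rho}[m_t]\bigr)$ via \eqref{mu_1} together with \eqref{lamda_RAT}, and regrouping the two sums. Your additional remarks (the lower-bound rate making the result conservative, and the caveat about $m_t$'s dependence on $T_{\mathrm W}$) are consistent with, and slightly more explicit than, the paper's justification.
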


\begin{proof}
See Appendix B.
\end{proof}

\section{POWER-ADAPTIVE TRANSMISSION (PAT)}
The PAT scheme,  one has $P_\mathrm{T}<P_\mathrm{max}$, where $P_\mathrm{max}$ is the maximum transmit power at the LEO satellite, the data rate is fixed and the transmit power at $\mathrm{S}$ varies. If the fixed data rate is $R_\mathrm{fix}$, when the channel gain is so small that the LEO STT link between  $\mathrm{T}$ and  $\mathrm{S}$ cannot afford reliable transmission, the reliable date rate is zero. In the following analysis, the throughput, EE, and DOR of PAT will be derived.

\subsection{Throughput and EE}
Similar to the RAT scheme, the  total service duration of $\mathrm{S}$ is divided into $N$ slots. In the $n$-th time slot, the distance between $\mathrm{T}$ and $\mathrm{S}$ is defined  $d_{\mathrm{TS}}[n]=\mathrm{max}\left\{ d_{\mathrm{TS}}\left(t\right)\right\} $,
in which $\frac{\left(n-1\right)T_{\mathrm{slot}}}{N}\leq t\leq \frac{nT_{\mathrm{slot}}}{N}$.
$P_{k,n}^\mathrm{PAT}$ is denoted as the transmit power at ${\mathrm{S}}$ when ${\mathrm{T}}$ in the  $n$-th time slot and the channel gain stays into $h_{k}$, $k\in\left\{ 1,\cdots,K\right\} $.

\begin{equation}
P_{k,n}^\mathrm{PAT}=\begin{cases}
0, & P_\mathrm{instan}>P_\mathrm{max};\\
P_\mathrm{instan}, & \mathrm{else}
\end{cases},
\end{equation}
in which
\begin{equation}
P_\mathrm{instan}=\frac{\sigma^{2}{d_{\mathrm{TS}}^{\rho}[n]}}{\mu_{k-1}^{2}}\left({2^{\frac{ R_\mathrm{fix}}{B}}-1}\right).
\end{equation}

When $P_{\mathrm{instan}}>P_{\mathrm{max}}$, $u_{k-1}<\sigma\sqrt{\frac{\left(2^{\frac{R_{\mathrm{fix}}}{B}}-1\right)d_{\mathrm{TS}}^{\rho}\left[n\right]}{P_{\mathrm{max}}}}$.
To get a simpler partition of channel gain, we set

\begin{align}
u_{1}=\sigma\sqrt{\frac{\left(2^{\frac{R_{\mathrm{fix}}}{B}}-1\right)d_{\mathrm{max}}^{\rho}}{P_{\mathrm{max}}}}.
\end{align}

When $\mathrm{T}$ is in the $n$-th time slot and the channel gain stays into $h_{k}$, the transmit power at $\mathrm{S}$ and the data rate used are

\begin{equation}\label{P_kn_up_low}
P_{k,n}^{\mathrm{PAT}}=\begin{cases}
0, & k=1;\\
\frac{\sigma^{2}d_{\mathrm{TS}}^{\rho}\left[n\right]}{u_{k-1}^{2}}\left(2^{\frac{R_{\mathrm{fix}}}{B}}-1\right), & \mathrm{upper},k=2,\cdots,K; \\
\frac{\sigma^{2}d_{\mathrm{TS}}^{\rho}\left[n\right]}{u_{k}^{2}}\left(2^{\frac{R_{\mathrm{fix}}}{B}}-1\right), & \mathrm{lower}, k=2,\cdots,K
\end{cases}
,\end{equation}
and
\begin{align}
R_{k,n}^{\mathrm{PAT}}=\begin{cases}
0, & k=1;\\
R_{\mathrm{fix}}, & k=2,\cdots,K
\end{cases}.
\end{align}

The average power of ${\mathrm{S}}$ under PAT scheme is
\begin{align}
\bar{P}_{2} =\frac{1}{N}\sum_{n=1}^{N}\sum_{k=1}^{K}\pi_{k,n}P_{k,n}^\mathrm{PAT}\label{ave_P_ada}.
\end{align}

Based on \eqref{rate_bar},  the average throughput with certain requirements of QoS $P_{\mathrm{max}}$ is
\begin{align}
\bar{R}_{2} =\frac{1}{N}\sum_{n=1}^{N}\sum_{k=2}^{K}\pi_{k,n}R_{k,n}^\mathrm{PAT}.\label{ave_R_ada}
\end{align}

Substituting (\ref{ave_P_ada}) and (\ref{ave_R_ada}) into (\ref{EE_bar}), EE can be derived as
\begin{align}\label{EE_PAT}
\eta_{{\rm EE}_{\mathrm{PAT}}}&  =\frac{\sum_{n=1}^{N}\sum_{k=2}^{K}\pi_{k,n}R_{k,n}^\mathrm{PAT}}{\sum_{n=1}^{N}\sum_{k=1}^{K}\pi_{k,n}P_{k,n}^\mathrm{PAT}},
\end{align}
which indicates the lower and upper bound with upper and lower bound of $P_{k,n}^{\mathrm{PAT}}$ in \eqref{P_kn_up_low}.
\subsection{DOR}
We assume that when a  small data packet under PAT scheme arrives at the time $t$, the channel gain stays as $h_{1}$ for an exponentially distributed period with the same PDF as the RAT scheme given by \eqref{f_wait_time}. The waiting period is denoted as $T_{\mathrm{W}}$. The delivery time of a small data packet arriving at the time $t$  with PAT scheme is denoted by ${\mathrm{DT}}_{\mathrm{P}}(t)$. Because the considered channel
gain can only move to the neighboring intervals, $h_{\mathrm{TS}}\left(t\right)$ enters region $h_{2}$ after the waiting period and rate $R_\mathrm{fix}$ is used to complete the data transmission. The delivery time of a small data packet ${\mathrm{DT}}_{\mathrm{P}}(t)$ is $T_{\mathrm{W}}+\frac{D}{R_{\mathrm{fix}}}$ when the  channel gain stays in $h_{1}$, and $\frac{D}{R_{\mathrm{fix}}}$ if the  channel gain stays in other regions.

\begin{equation}
\mathrm{DT_{\mathrm{P}}}(t)=\begin{cases}
\frac{D}{R_{\mathrm{fix}}}, & k=1;\\
T_{\mathrm{W}}+\frac{D}{R_{\mathrm{fix}}}, & k=2,\cdots,K
\end{cases},
\end{equation}
whose PMF can be written as

\begin{equation}
\Pr\left\{ \mathrm{DT_{\mathrm{P}}}(t)=\frac{D}{R_{\mathrm{fix}}}\right\} =1-\pi_{1}(t)
,\end{equation}
and

\begin{equation}
\Pr\left\{ \mathrm{DT_{\mathrm{P}}}(t)=T_{\mathrm{W}}+\frac{D}{R_{\mathrm{fix}}}\right\} =\pi_{1}(t).
\end{equation}

Therefore, the CDF of $\mathrm{DT_{\mathrm{P}}}$, denoted by $F_{\mathrm{DT_{\mathrm{P}}}(t)}(s),$ can be derived as
\begin{align}
F_{\mathrm{DT_{\mathrm{P}}}(t)}(s)=&\frac{1}{N}\sum_{n=1}^{N}\left[\pi_{1,n}\Pr\left\{ T_{\mathrm{W}}+\frac{D}{R_{\mathrm{fix}}}<s\right\}\right.
\notag\\
&~~~~~~~~~+\left.\sum_{k=2}^{K}\pi_{k,n}\mathcal{U}\left(s-\frac{D}{R_{\mathrm{fix}}}\right)\right],
\end{align}
where $\mathcal{U}(\cdot)$ denotes the unit step function.

Finally, the DOR with PAT scheme when the data arrives in the $n$-th time slot is denoted by $\mathrm{DOR}_{\mathrm{P}}\left[n\right]$, which can be calculated as

\begin{align}
&\mathrm{DOR_{\mathrm{P}}}(t)=\Pr\left\{ \mathrm{DT_{\mathrm{P}}}>T_{\mathrm{th}}\right\}\notag\\
&=1-\frac{1}{N}\sum_{n=1}^{N}\left\{\pi_{1,n}\left[1-\exp\left(-\frac{T_{\mathrm{th}}-\frac{D}{R_{\mathrm{fix}}}}{\lambda}\right)\right]\right.\notag\\
&~~~~\left.\times \mathcal{U}\left(T_{\mathrm{th}}-\frac{D}{R_{\mathrm{fix}}}\right)-\sum_{k=2}^{M}\pi_{k,n}\mathcal{U}\left(T_{\mathrm{th}}-\frac{D}{R_{\mathrm{fix}}}\right)\right\},
\end{align}
where
\begin{align}
\lambda=\frac{F_{\mathrm{G}}\left({\sigma^2{\frac{\left(2^{\frac{R_{\mathrm{fix}}}{B}}-1\right)d_{\mathrm{max}}^{\rho}}{P_{\mathrm{max}}}}}\right)}{N_R\left(\sigma\sqrt{\frac{\left(2^{\frac{R_{\mathrm{fix}}}{B}}-1\right)d_{\mathrm{max}}^{\rho}}{P_{\mathrm{max}}}}\right)}.
\end{align}

Then, the DOR with PAT scheme when the terminal is in the service area of the LEO satellite is shown on the top of next page in \eqref{DOR2}.
\begin{figure*}
  \centering
\begin{align}\label{DOR2}
\mathrm{DOR_{\mathrm{P}}}&=\frac{1}{T_{\mathrm{s}}}\int_{0}^{T_{\mathrm{s}}}\mathrm{DOR_{\mathrm{P}}}\left(t\right) dt \notag\\
&=1-\frac{1}{N}\sum_{n=1}^{N}\left\{\pi_{1,n}\left[1-\exp\left(-\frac{T_{th}-\frac{D}{R_{\mathrm{fix}}}}{\lambda}\right)\right]
+\sum_{k=2}^{M}\pi_{k,n}\right\}\mathcal{U}\left(T_{th}-\frac{D}{R_{\mathrm{fix}}}\right)\notag\\
&=\begin{cases}
1, & T_{\mathrm{th}}<\frac{D}{R_{\mathrm{fix}}};\\
\frac{1}{N}\sum_{n=1}^{N}\pi_{1,n}\exp\left(-\frac{T_{\mathrm{th}}-\frac{D}{R_{\mathrm{fix}}}}{\lambda}\right), & T_{\mathrm{th}}\geq \frac{D}{R_{\mathrm{fix}}}.
\end{cases}
\end{align}
\rule{18cm}{0.01cm}
\end{figure*}

To get a clear understanding of the aforementioned metrics of the considered LEO STT system, we put the  definitions of main variables related to throughput, EE, DOR  in Table I.

\begin{table}
\centering{}
\caption{
NOTATION SUMMARY}
\begin{tabular}{|c|c|}
\hline
\textbf{Symbol} & \textbf{Definition}\tabularnewline
\hline
$\bar{R}$ & throughput\tabularnewline
\hline
$\eta_{{\rm EE}}$ & energy efficiency\tabularnewline
\hline
DOR &  delay outage rate\tabularnewline
\hline
$\bar{R}_{1}^{\mathrm{L}}$ & lower bound of   throughput with RAT scheme \tabularnewline
\hline
$\bar{R}_{1}^{\mathrm{U}}$ &  upper bound of   throughput with RAT scheme \tabularnewline
\hline
$\eta_{{\rm EE}_{\mathrm{RAT}}}^{\mathrm{L}}$ & lower bound of EE with RAT scheme\tabularnewline
\hline
$\eta_{{\rm EE}_{\mathrm{RAT}}}^{\mathrm{U}}$ & upper bound of EE with RAT scheme\tabularnewline
\hline
$\mathrm{DOR_{\mathrm{\mathrm{R}}}}$ &  DOR with RAT scheme\tabularnewline
\hline
$\bar{R}_{2}$ &  throughput with PAT scheme\tabularnewline
\hline
$\eta_{{\rm EE}_{\mathrm{PAT}}}$ & EE with PAT scheme\tabularnewline
\hline
$\mathrm{DOR_{\mathrm{P}}}$ &  DOR with PAT scheme\tabularnewline
\hline
\end{tabular}
\end{table}

\section{Numerical Results And Disscussion}
\begin{table}
\centering{}
\caption{Values of the Simulation Parameters}
\begin{tabular}{|c|c|}
\hline
Parameter  & Value\tabularnewline
\hline
$m$ & $10.1$\tabularnewline
\hline
$\varOmega$ & $0.825$\tabularnewline
\hline
$b$  & $0.126$\tabularnewline
\hline
$B_\mathrm{W}$  & $60\,\mathrm{MHz}$\tabularnewline
\hline
$\sigma^{2}$  & $-66\,\mathrm{dBm}$\tabularnewline
\hline
$\gamma_{\mathrm{min}}$  & $0\,\mathrm{dB}$\tabularnewline
\hline
$\bar{\phi}$  & $1.55$\tabularnewline
\hline
$\kappa$  & $24.2$\tabularnewline
\hline
$T_{\mathrm{th}}$  & $0-1\,\mathrm{ms}$\tabularnewline
\hline
$D$  & $500\,\mathrm{Kbits}$\tabularnewline
\hline
$R_{\mathrm{e}}$  & $6371\,\mathrm{km}$\tabularnewline
\hline
$T_{\mathrm{slot}}$  & $1\,\mathrm{sec}$\tabularnewline
\hline
\end{tabular}
\end{table}

In this section, numerical results and discussion  will be provided to study the performances of the considered LEO STT system. The simulation methodology is the proposed analytical framework while the analytical results of throughput, EE and DOR with RAT and PAT schemes are obtained from Eqs. (22-23), Eqs. (27-28),  Eq. (34) and  Eq. (41), Eq. (42),  Eq. (49), and  the simulation results are obtained  with Monte-Carlo simulation. The simulation parameters
are listed in Table II and each plot is realized over 10000 channel
realizations.  Besides, the analytical method based on the randomly located terrestrial terminals \cite{Doppler_3} is regarded as
the benchmark.

Figs. \ref{through_RAT}, \ref{EE_RAT}, and \ref{DORvsT} describe the throughput, EE, and DOR of the RAT scheme, respectively. In Fig. \ref{through_RAT}, we illustrate the relationship between the throughput of RAT and the orbit height of $\mathrm{S}$ for various transmit powers at the LEO satellite.
 The throughput increases as $P_\mathrm{T}$ increases or $H$ decreases, due to a larger average channel gain between $\mathrm{S}$ and $\mathrm{T}$ and a larger received SNR at $\mathrm{T}$. Moreover, the simulation results fall between the upper and lower bounds. Furthermore, compared to the results with randomly located terrestrial terminals, the throughput obtained from our proposed analytical framework is superior
to the benchmark, which proves
the superiority of the proposed analytical framework.

Fig. \ref{EE_RAT} presents the EE performance of RAT for different orbit heights of the LEO satellite. Similar to the throughput performance in Fig. \ref{through_RAT}, EE decreases when $P_{\mathrm{T}}$ decreases or the orbit height of the LEO satellite increases, beacause low data rate reduces EE.
The transmit power at the LEO satellite exhibits a positive effect on the EE performance. When the transmit power at the LEO satellite is small, the variation of EE with different orbit heights is small, and is large when the transmit power at the LEO satellite increases. This means that, when the height of the LEO satellite is relatively low, the improvement arising from increasing the transmit power at the LEO satellite is more conspicuous, due to a more sensible enhancement on the data rate compared to the average power consumption. It indicates that increasing the transmit power at the LEO satellite is an effective way to enhance EE when the LEO satellite works on a lower orbit height. Furthermore, similar to Fig. \ref{through_RAT}, simulation results fall between the upper and lower bounds and is superior
to the benchmark.

\begin{figure}[!h]
\setlength{\abovecaptionskip}{0pt}
\setlength{\belowcaptionskip}{10pt}
\centering
\includegraphics[scale=0.46]{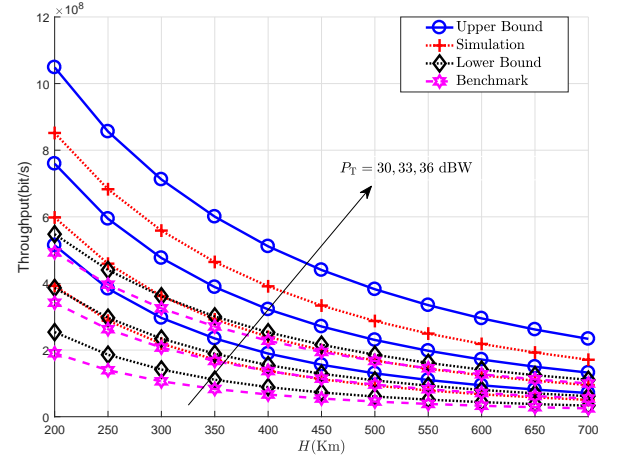}
\caption{Throughput of RAT under different orbit height.}
\label{through_RAT}
\end{figure}
\begin{figure}[!h]
\setlength{\abovecaptionskip}{0pt}
\setlength{\belowcaptionskip}{10pt}
\centering
\includegraphics[scale=0.46]{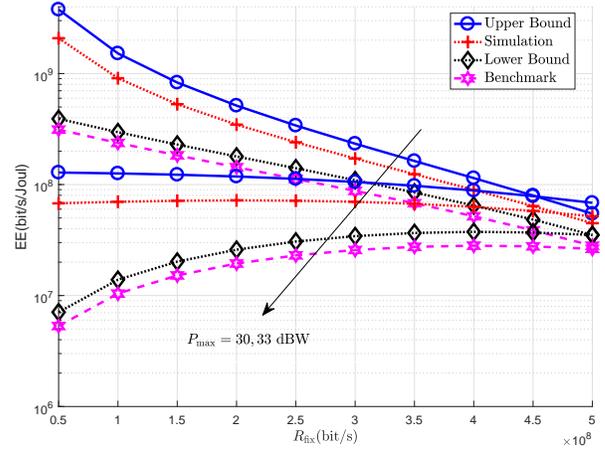}
\caption{EE of RAT under different orbit height.  }
\label{EE_RAT}
\end{figure}
\begin{figure}[!h]
\setlength{\abovecaptionskip}{0pt}
\setlength{\belowcaptionskip}{10pt}
\centering
\includegraphics[scale=0.46]{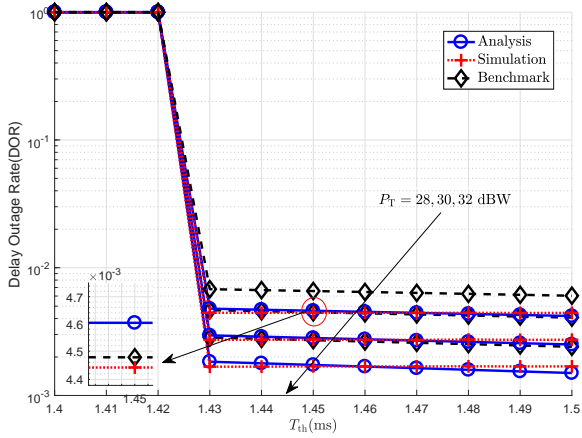}
\caption{Delay outage rate of RAT under different $T_{\mathrm{th}}$.  }
\label{DORvsT}
\end{figure}
\begin{figure}
\setlength{\abovecaptionskip}{0pt}
\setlength{\belowcaptionskip}{10pt}
\begin{centering}
\includegraphics[scale=0.46]{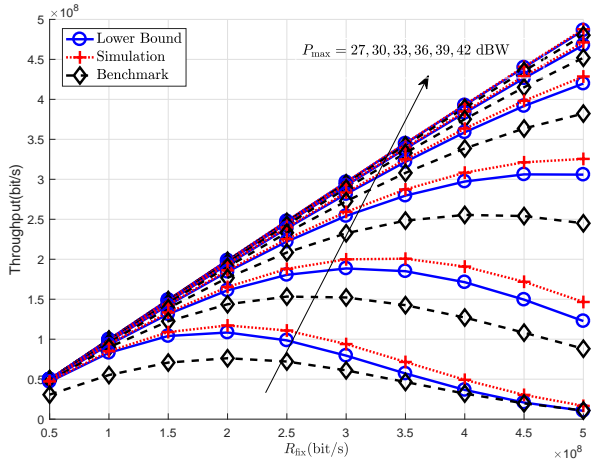}
\par\end{centering}
\caption{Throughput of PAT under different fix data rate.  }
\label{through_PAT}
\end{figure}

In Fig. \ref{DORvsT}, the DOR performance of RAT over the LEO STT link is presented. The analytical results match closely with simulation and DOR is zero when $T_{\mathrm{th}}$ is too small which means the LEO STT link cannot support a very small delay transmission. The DOR declines with the increased transmit power at the LEO satellite or  $T_{\mathrm{th}}$. Increasing the transmit power at the LEO satellite leads to higher received SNR at the terrestrial terminal and larger average throughput which decreases DOR. Moreover, the system is  more tolerant to transmission delay and exhibits better DOR performance with smaller $T_{\mathrm{th}}$.
Furthermore, compared to the results with randomly located terrestrial terminals, the DOR obtained from our proposed analytical framework is lower than that from
benchmark, which proves
the superiority of the proposed analytical framework.

Figs. \ref{through_PAT}, \ref{EE_PAT}, and \ref{DOR_PAT} describe the throughput, EE, and DOR performance of the PAT scheme, respectively. From Fig. \ref{through_PAT}, we could see that the lower bound of the throughput matches closely with the simulation results with $P_\mathrm{T}$ is $36$, $39$  and $42$ dBW, which means the higher the fix data rate is, the larger throughput is. When $P_\mathrm{T}$ is $27$, $30$, and $33$ dBW, there exists a gap between the lower bound and simulation results of the throughput, which is acceptable.  At first, the throughput increases and then decreases with increasing fixed transmission rate. This is because,
when the transmit power at the LEO satellite is relatively small, it  may not afford adequate power to support a higher fixed transmission rate and reduce the throughput with higher fixed transmission rate.  Obviously, the throughput obtained from our proposed analytical framework is superior to the benchmark.
Moreover, compared Figs. \ref{EE_RAT} and \ref{EE_PAT}, one can conclude that the PAT scheme is easier to get better EE performance with similar constraints because the PAT scheme can avoid invalid transmit power at the LEO satellite.

\begin{figure}
\setlength{\abovecaptionskip}{0pt}
\setlength{\belowcaptionskip}{10pt}
\begin{centering}
\includegraphics[scale=0.46]{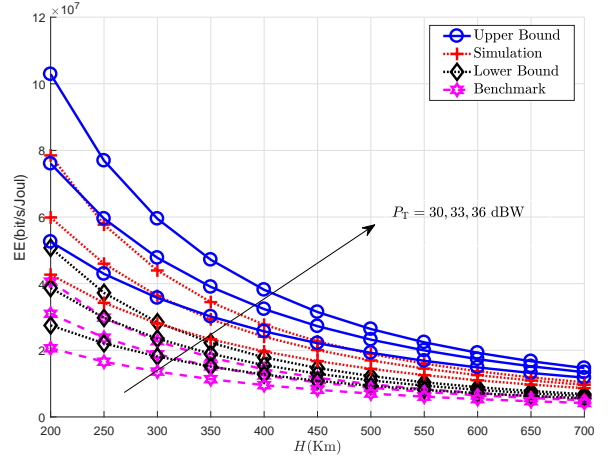}
\par\end{centering}
\caption{EE  of PAT under different fix data rate.}
\label{EE_PAT}
\end{figure}

In Fig. \ref{EE_PAT},  we describe the relationship between the EE  performance of the PAT scheme and fixed data rate in case of various transmit power at the LEO satellite. The transmit power at the LEO satellite shows a positive impact on the EE when the fixed data rate is relatively small and shows a negative impact on the EE when the fixed data rate is relatively large. This is because larger transmit power at the LEO satellite means a better received SNR at the terrestrial terminal but a larger power consumption.
 When the transmit power at the LEO satellite is relatively large and the fixed date rate is relatively small, the average throughput can compensate for the power consumption and exhibit a positive impact on the EE performance.
 Thus, the influence of the fixed data rate reflects the effective utilization of the transmit power at the LEO satellite which decreases EE. When the fixed data rate is too great or too small, the transmit power at the LEO satellite cannot be fully utilized and result in the decline of EE. Similar to Fig. \ref{EE_RAT}, simulation results fall between the upper and lower bounds and  is superior
to the benchmark. Compared Figs. \ref{through_RAT} and \ref{through_PAT}, one can conclude that the RAT scheme is easier to realize high throughput with similar constraints because the RAT scheme can use CSI more effectively.

In Fig. \ref{DOR_PAT}, the DOR performance of PAT over the LEO STT link is presented. DOR equals 1 when  $T_{\mathrm{th}}=0$, which means that the LEO STT link cannot afford zero delay transmission. The larger the maximum transmit power at the LEO satellite is, the smaller the DOR obtained by the PAT scheme is. This is easy to understand that the maximum transmit power at the LEO satellite can lead to a stabilized received SNR with fix data rate and then improve the DOR performance. DOR decreases with increasing $T_{\mathrm{th}}$. Moreover, the upper and lower bounds show first steep and then more stationary slope, this is because that we adopt the channel gain discretization and time discretization.
 DOR performance can give hints to researchers to design systems considering both  reliability and latency, because absolute reliability and ultra-low delay cannot be  satisfied at the same time, but a trade-off can be arrived to get relatively  high reliability and acceptable delay with exiting constrains.
At the same time, simulation results fall between the upper and lower bounds  and is superior
to the benchmark, which is capable of the DOR analysis.  Comparing Figs. \ref{DORvsT} and \ref{DOR_PAT}, we can see that, with relatively large transmit power at the LEO satellite, the RAT scheme can offer a lower DOR with larger $T_{\mathrm{th}}$ and PAT scheme can realize a lower DOR with smaller $T_{\mathrm{th}}$. This is because the CSI is   more effectively used  with larger $T_{\mathrm{th}}$ for the RAT scheme   and higher data rate is more reliably afforded with lower $T_{\mathrm{th}}$ for the  PAT scheme.

\begin{figure}
\setlength{\abovecaptionskip}{0pt}
\setlength{\belowcaptionskip}{10pt}
\begin{centering}
\includegraphics[scale=0.46]{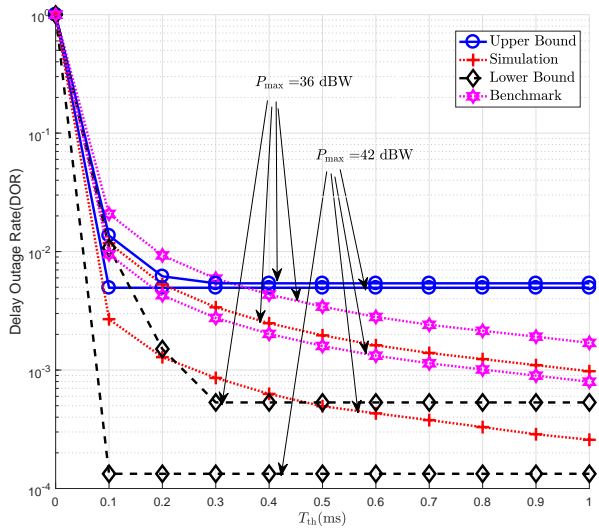}
\par\end{centering}
\caption{Delay outage rate  of PAT under different maximum transmit power at the LEO satellite.}
\label{DOR_PAT}
\end{figure}
\section{Conclusion}
Taking the strong time-varying transmission distance into account, we propose a new analytical framework of using FSMC model and time discretization method, and evaluate the performance of the considered LEO STT system. On one hand, the FSMC model is adopted to analyze the effect of the small-scale fading which is described as the SR channel model over LEO STT links. On the other hand,  the time discretization method is employed to reflect the large-scale fading, because the transmission distance over LEO STT links strongly changes as the LEO satellite flies fast in its orbit. To demonstrate the applications of the proposed framework, we investigate the capacity, energy efficiency, and outage rate performance of the considered LEO STT scenarios with RAT and PAT schemes. Closed-form expressions for throughput, EE, and DOR of the considered LEO STT scenarios with the proposed framework are derived and verified.
Furthermore, we would extend the proposed analytical
framework to scenarios with relays or with outdated CSI in our future work.

\appendices

\section{Distance between $\mathrm{S}$ and $\mathrm{T}$}

\begin{figure}[htb]
\begin{centering}
\includegraphics[width=3.5 in]{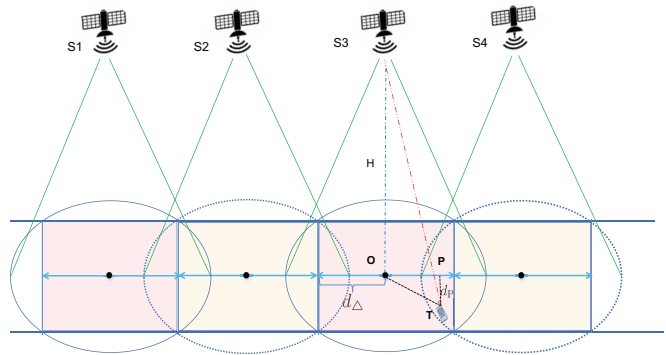}
\par\end{centering}
\caption{Distance between the LEO satellite and the terrestrial terminal at different time}
\label{move2}
\end{figure}
\begin{figure*}[htb]
\centering{}\includegraphics[scale=0.48]{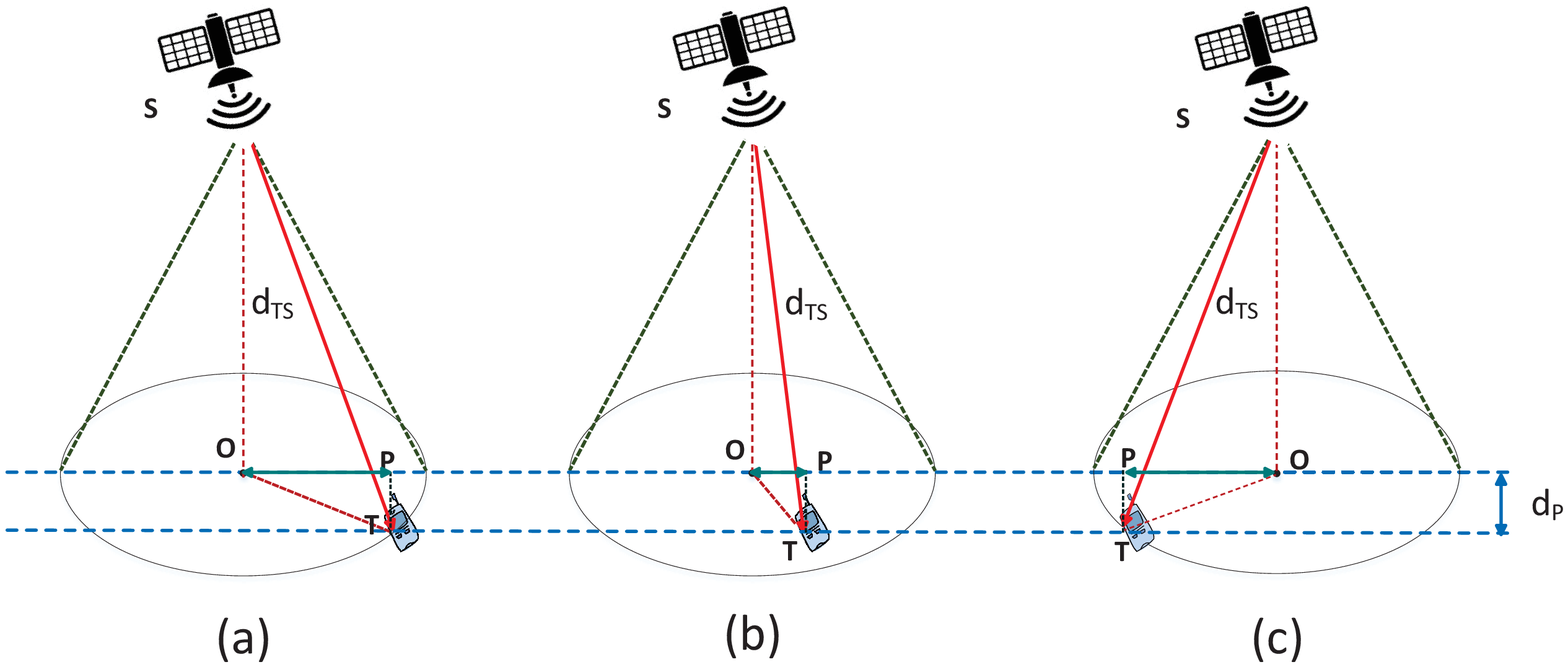}
\caption{Distance between the LEO satellite and the terrestrial terminal at a different time under the service of the LEO satellite. We assume that the LEO satellite moves from left to right. In (a) and (c), the terrestrial terminal is just in the coverage area of the LEO satellite and is on the boundary of the coverage area. The  LEO STT link is just set in (a). As the satellite moves with a certain speed, the LEO STT link is valid like in (b) and is invalid when the satellite depart from the location like in (c).}
\label{move1}
\end{figure*}

As depicted in Figs.  \ref{move2} and \ref{move1}, the distance between $\mathrm{T}$ and $\mathrm{S}$  is
\begin{equation}
d_{\mathrm{TS}}=\sqrt{d_{\mathrm{P}}^{2}+d_{\mathrm{OP}}^{2}+H^{2}}.
\end{equation}

As $\mathrm{S}$   moves with a certain speed $v_{\mathrm{sat}}$, the speed of sub-satellite point $v=v_{\mathrm{sat}}\cdot\frac{R_{\mathrm{e}}}{R_{\mathrm{e}}+H}$,
where $R_{\mathrm{e}}$ is the radius of the earth and $H$ is the LEO satellites' orbital altitude. The  total service duration that the terminal is within
the satellite coverage is denoted as $T_{\mathrm{s}}=\frac{2d_{\triangle}}{v}$. As depicted in Fig. \ref{move2}, $d_{\triangle}$ is fixed and is defined by the annular covering belt, where the continuous coverage area is determined by the coverage of all the LEO satellites circling the earth with the same orbit height. For a given LEO satellite belt,  the service duration is fixed if $\mathrm{T}$  is in the coverage area, regardless of the distance between $\mathrm{T}$ and $\mathrm{S}$.

Next we analyze the time-varying distance between $\mathrm{T}$ and $\mathrm{S}$. As sub-satellite point moves with a certain speed $v$, the distance between $\mathrm{P}$ and $\mathrm{O}$ is
\begin{equation}
d_{\mathrm{OP}}\left(t\right)=\left|d_{\triangle}-vt\right|,
\end{equation}
where $0\leq t\leq T_{\mathrm{s}}$ and $0\leq d_{\mathrm{OP}}\leq d_{\triangle}$.

Then, the distance between $\mathrm{T}$ and $\mathrm{S}$  is obtained as
\begin{equation}
d_{\mathrm{TS}}\left(t\right)=\sqrt{\left|d_{\triangle}-vt\right|^{2}+d_{\mathrm{P}}^{2}+H^{2}}.
\end{equation}
When $\mathrm{T}$  is just within the coverage area of the $\mathrm{S}$  and is on the boundary of the coverage area, $d_{\mathrm{TS}}\left(0\right)=d_{\mathrm{TS}}\left(T_{\mathrm{s}}\right)=\sqrt{H^{2}+R^{2}}$, where
 $R$ is the coverage area radius of LEO satellite.

When the satellite is right above the terminal and $\mathrm{T}$  is on the track of sub-satellite point, $d_{\mathrm{TS}}\left(\frac{T_{\mathrm{s}}}{2}\right)=H$.

Considering all the terminals within the continuous service of  $\mathrm{S}$, the range of $d_{\mathrm{TS}}$ is

\begin{equation}\label{dTS_all}
H\leq d_{\mathrm{TS}}\leq\sqrt{H^{2}+R^{2}}.
\end{equation}
The maximum of the distance between $\mathrm{T}$  and $\mathrm{S}$ is defined as  $d_{\mathrm{TS-max}}$ and can be given as

\begin{equation}
d_{\mathrm{TS-max}}=\sqrt{H^{2}+R^{2}}.
\end{equation}

When $\mathrm{T}$ is fixed and the location of $\mathrm{T}$ is described as $d_{\mathrm{P}}$, the distance between  $\mathrm{S}$  and  $\mathrm{T}$ is
\begin{equation}\label{dTS_fixed}
\sqrt{d_{\mathrm{P}}^{2}+H^{2}}\leq d_{\mathrm{TS}}\leq\sqrt{{d_{\triangle}}^{2}+d_{\mathrm{P}}^{2}+H^{2}},
\end{equation}
which is smallest when  $\mathrm{S}$  moves to the projection point of $\mathrm{T}$  on the track of sub-satellite point as depicted in Fig. \ref{move2}-a and \ref{move2}-c and is largest when $\mathrm{T}$ is on the edge of the coverage area of $\mathrm{S}$. To summarize,  the range of $d_{\mathrm{TS}}$ is expressed in \eqref{dTS_all} considering all the  terminals randomly located in the service area of S, and the range of $d_{\mathrm{TS}}$ is presented in \eqref{dTS_fixed} when $\mathrm{T}$ is fixed and the location of $\mathrm{T}$ is described as $d_{\mathrm{P}}$.
\section{Proof of Theorem \ref{DOR1_theorem}}
 Based on \eqref{dor_def}, the average DOR with RAT scheme, denoted by $\mathrm{DOR}_{\mathrm{R}}$, can be given as \eqref{DOR1_integral}, shown on the top of next page.
\begin{figure*}
 \centering
\begin{align}\label{DOR1_integral}
&\mathrm{DOR_{\mathrm{\mathrm{R}}}}=\frac{1}{T_{\mathrm{s}}}\int_{0}^{T_{\mathrm{s}}}\mathrm{DOR_{\mathrm{R}}}\left(t\right)dt\notag\\
&= \frac{1}{T_{\mathrm{s}}}\int_{0}^{T_{\mathrm{s}}}  \left\{ 1-\sum_{m_t=1}^{N}\frac{\pi_{1,m_t}}{N}\left[1-\exp\left(-\frac{T_{\mathrm{th}}-\frac{D}{R_{2,m_t}^{\mathrm{RAT}}}}{\lambda}\right)\right]\mathcal{U}\left(T_{\mathrm{th}}-\frac{D}{R_{2,m_t}^{\mathrm{RAT}}}\right)-
\frac{1}{N}\sum_{m_t=1}^{N}\sum_{k=2}^{K}\pi_{k,m_t}\mathcal{U}\left(T_{\mathrm{th}}-\frac{D}{R_{k,m_t}^{\mathrm{RAT}}}\right) \right\}dt
\end{align}
\rule{18cm}{0.01cm}
\end{figure*}

The result of the first integral term in  \eqref{DOR1_integral} is 1.
The arriving time of the data packet is uniformly distributed, and
all of the waiting period is subjected to the same distribution, so
$m_t=\left\lceil \frac{t+T_{\mathrm{W}}}{T_{\mathrm{slot}}}\right\rceil \,\mathrm{MOD}\,N$
is uniformly distributed in the $N$ time slot. In other words, we have
\begin{equation}
\Pr\left\{ m_t=n\right\} =\frac{1}{N},\:n=1,\cdots,N.
\end{equation}
%
%

By using \eqref{m_equation}, the second integral term in  \eqref{DOR1_integral} is
\begin{align}
\mathcal{I}_{1}=&\frac{1}{N}\sum_{m_t=1}^{N}\pi_{1,m_t}\left[1-\exp\left(-\frac{T_{\mathrm{th}}-\frac{D}{R_{2,m_t}^{\mathrm{RAT}}}}{\lambda}\right)\right]
\notag\\
&\times
\mathcal{U}\left(T_{\mathrm{th}}-\frac{D}{R_{2,m_t}^{\mathrm{RAT}}}\right),
\end{align}
in which
\begin{align}
R_{2,m_t}^{\mathrm{RAT}}&=B\log_{2}\left(1+\gamma_{2,m_t}\right) \notag\\
&=B\log_{2}\left(1+\frac{P_{\mathrm{T}}}{\sigma^{2}}\frac{\mu_{1}^{2}}{d_{\mathrm{TS}}^{\rho}[m_t]}\right) \notag\\
&=B\log_{2}\left(1+\frac{\gamma_{\mathrm{min}}d_{\mathrm{max}}^{\rho}}{d_{\mathrm{TS}}^{\rho}[m_t]}\right).
\end{align}
Thus, the second integral term in \eqref{DOR1_integral} is
\begin{align}\label{I1}
&\mathcal{I}_{1}=\frac{1}{N}\sum_{m_t=1}^{N}\pi_{1,m_t}\mathcal{U}\left(T_{\mathrm{th}}-\frac{D}{B\log_{2}\left(1+\frac{\gamma_{\mathrm{min}}d_{\mathrm{max}}^{\rho}}{d_{\mathrm{TS}}^{\rho}[m_t]}\right)}\right)
\notag\\
&\times \left[1-\exp\left(-\frac{1}{\lambda}\left(T_{\mathrm{th}}-\frac{D}{B\log_{2}\left(1+\frac{\gamma_{\mathrm{min}}d_{\mathrm{max}}^{\rho}}{d_{\mathrm{TS}}^{\rho}[m_t]}\right)}\right)\right)\right]
.\end{align}

%
%
%


We partition the service time of each satellite with equal intervals, and the satellite moves with steady speed. Thus, the third integral term in \eqref{DOR1_integral} is
\begin{align}\label{third}
&\frac{1}{T}\int_{0}^{T}\frac{1}{N}\sum_{m_t=1}^{N}\sum_{k=2}^{K}\pi_{k,m_t}\mathcal{U}\left(T_{\mathrm{th}}-\frac{D}{R_{k,m_t }^{\mathrm{RAT}}}\right)dt \notag\\
&=\frac{1}{N}\sum_{n=1}^{N}\sum_{k=2}^{M}\pi_{k,n}\mathcal{U}\left(T_{\mathrm{th}}-\frac{D}{B\log_{2}\left(1+\frac{P_{\mathrm{T}}}{\sigma^{2}}\frac{\mu_{k-1}^{2}}{d_{\mathrm{TS}}^{\rho}[n]}\right)}\right).
\end{align}

Combining \eqref{DOR1_integral}, \eqref{I1} and \eqref{third}, the proof of \emph{Theorem \ref{DOR1_theorem}} is completed.

\begin{IEEEbiography}
[{\includegraphics[width=1in,height=1.25in,clip,keepaspectratio]{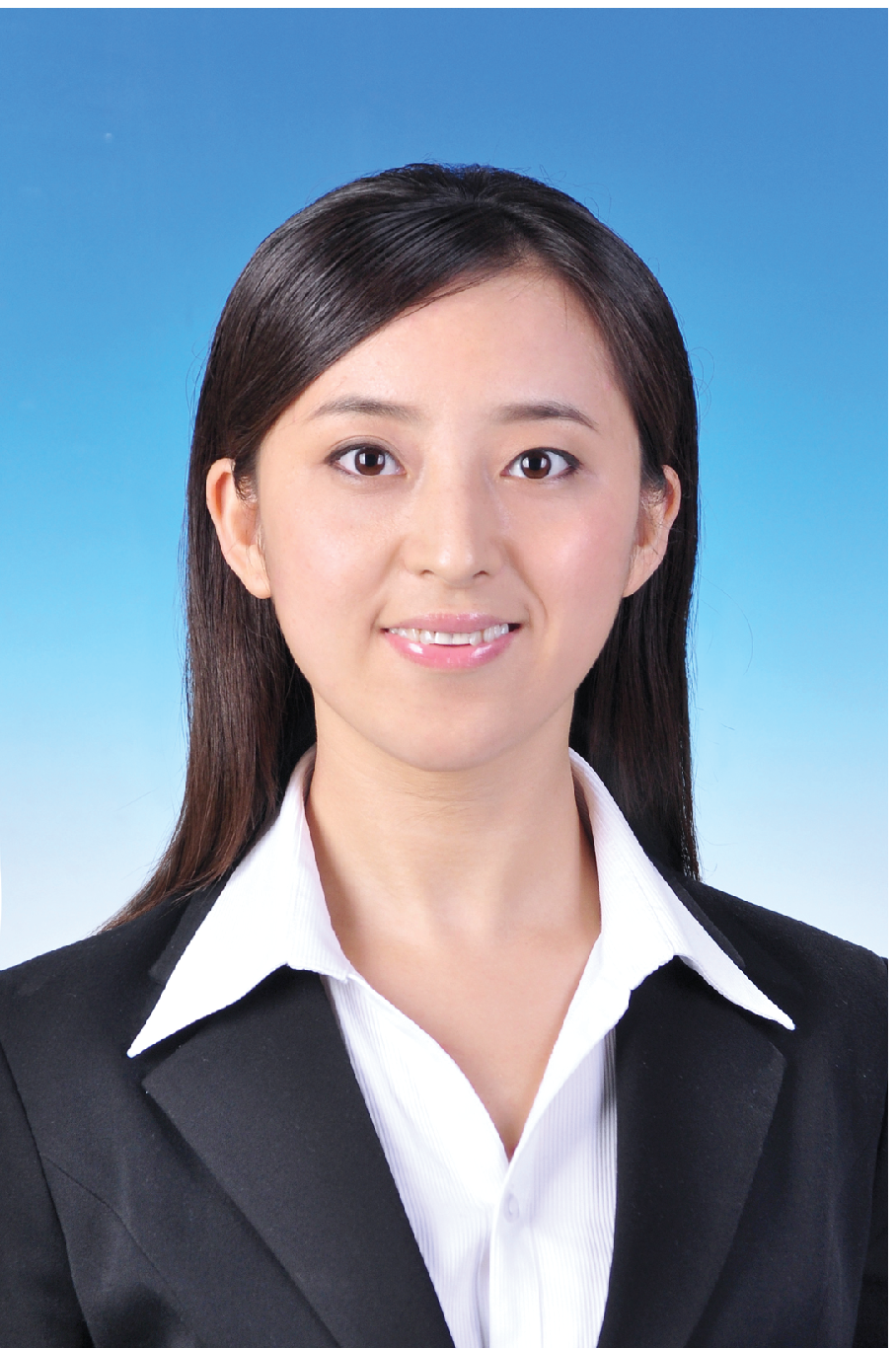}}]{Yuanyuan Ma}
 received the B.Sc  degree in communication engineering from Henan Normal University, China, in 2011, and the M.S. degree in information and communication engineering from Beijing Institute of Technology, China, in 2014. She is currently pursuing the Ph.D. degree in information and communication engineering from Beijing University of Posts and Telecommunications, China. Her current research interests include satellite communication, cognitive radio networks and performance analysis.
\end{IEEEbiography}

\begin{IEEEbiography}[{\includegraphics[width=1in,height=1.25in,clip,keepaspectratio]{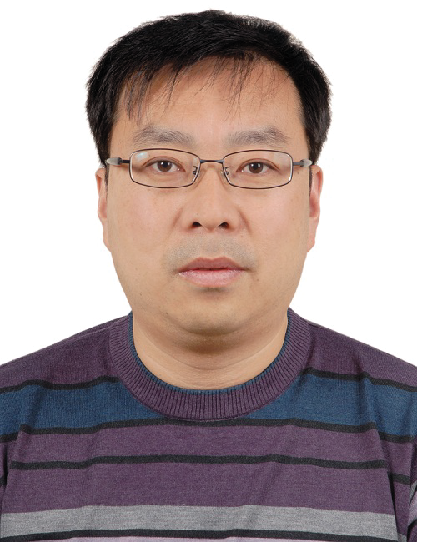}}]{Tiejun Lv}
(M'08-SM'12) received the M.S. and Ph.D. degrees in electronic engineering from the University of Electronic Science and Technology of China (UESTC), Chengdu, China, in 1997 and 2000, respectively. From January 2001 to January 2003, he was a Postdoctoral Fellow with Tsinghua University, Beijing, China. In 2005, he was promoted to a Full Professor with the School of Information and Communication Engineering, Beijing University of Posts and Telecommunications (BUPT). From September 2008 to March 2009, he was a Visiting Professor with the Department of Electrical Engineering, Stanford University, Stanford, CA, USA. He is the author of three books, more than 100 published IEEE journal papers and 200 conference papers on the physical layer of wireless mobile communications. His current research interests include signal processing, communications theory and networking. He was the recipient of the Program for New Century Excellent Talents in University Award from the Ministry of Education, China, in 2006. He received the Nature Science Award in the Ministry of Education of China for the hierarchical cooperative communication theory and technologies in 2015.
\end{IEEEbiography}

\begin{IEEEbiography}
[{\includegraphics[width=1in,height=1.25in,clip,keepaspectratio]{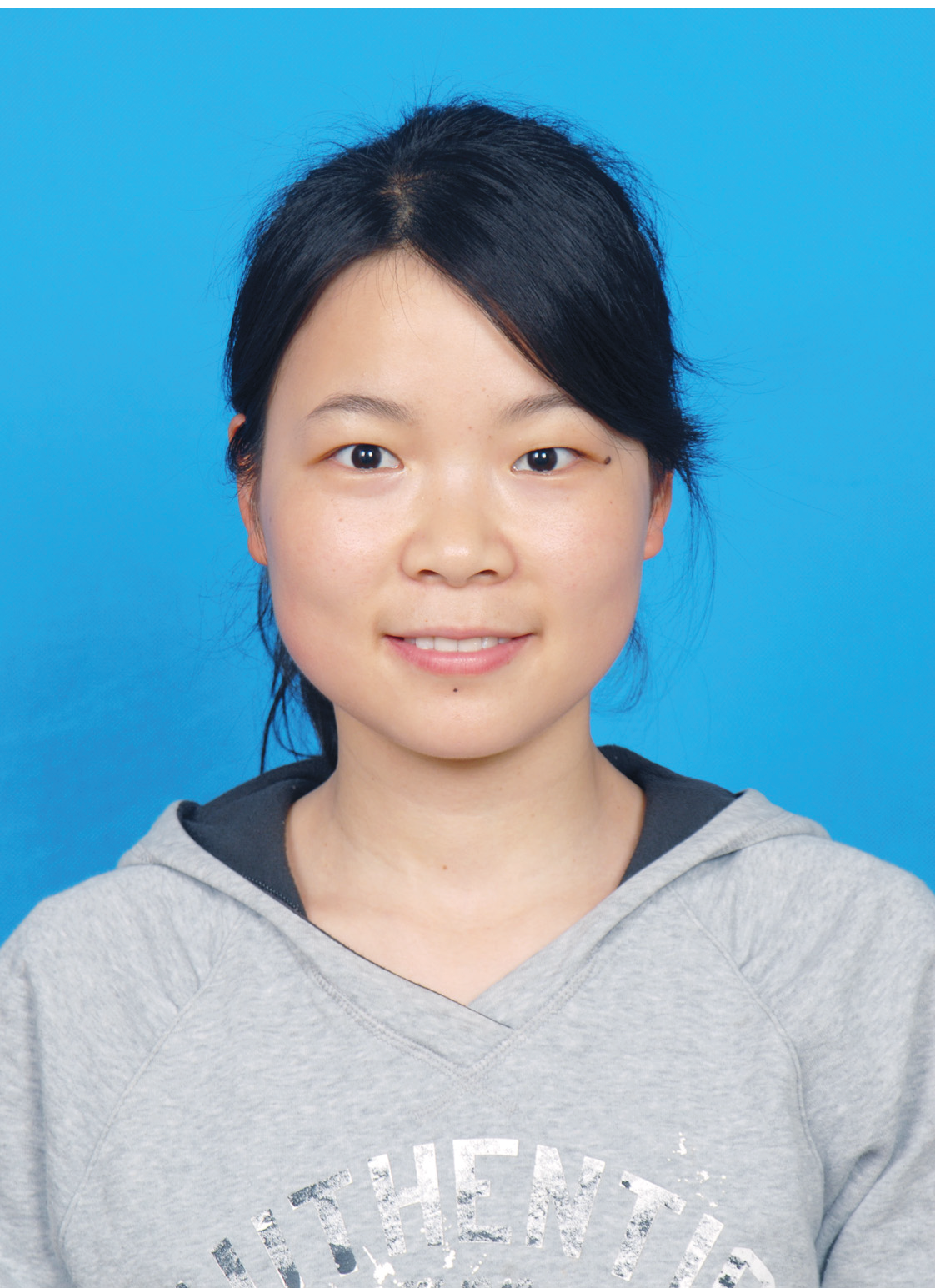}}]{Tingting Li}
received the B.Sc. degree in mathematics and applied mathematics and the Ph.D. degree in computational mathematics
from Chongqing University, Chongqing, China, in 2006 and 2012, respectively. In July 2012, she joined the School of Mathematics and Statistics,
Southwest University, Chongqing, China, where she is currently an Associate Professor. Her research interests include statistics and its applications.
\end{IEEEbiography}

\begin{IEEEbiography}
[{\includegraphics[width=1in,height=1.25in,clip,keepaspectratio]{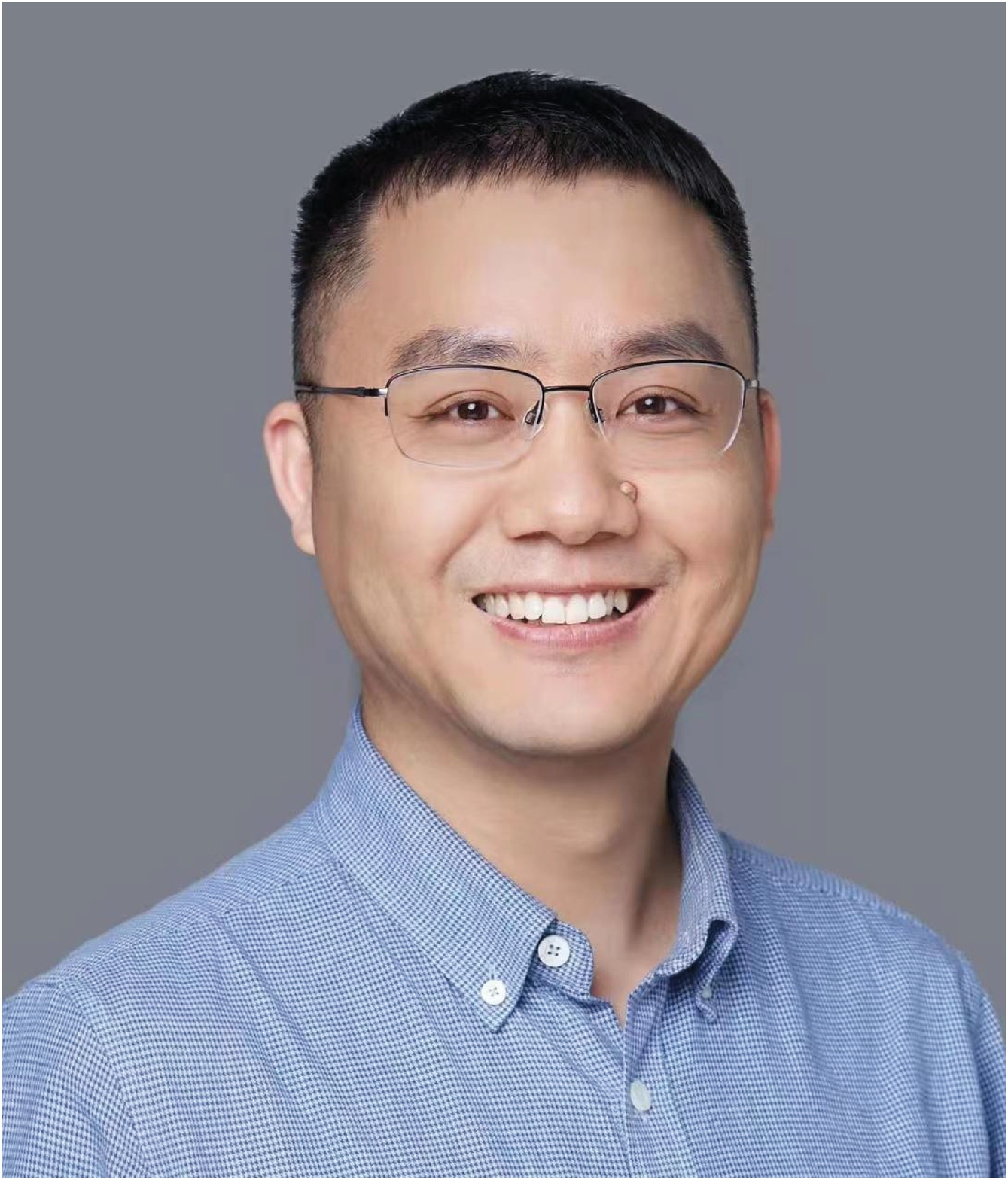}}]{Gaofeng Pan}
(Senior Member, IEEE) received
his B.Sc in Communication Engineering from
Zhengzhou University, Zhengzhou, China, in 2005,
and the Ph.D. degree in Communication and Information Systems from Southwest Jiaotong University,
Chengdu, China, in 2011. He is currently with the
School of Cyberspace Science and Technology, Beijing Institute of Technology, China, as a Professor.
His research interest spans special topics in communications theory, signal processing, and protocol
design.
\end{IEEEbiography}

\begin{IEEEbiography}
[{\includegraphics[width=1in,height=1.25in,clip,keepaspectratio]{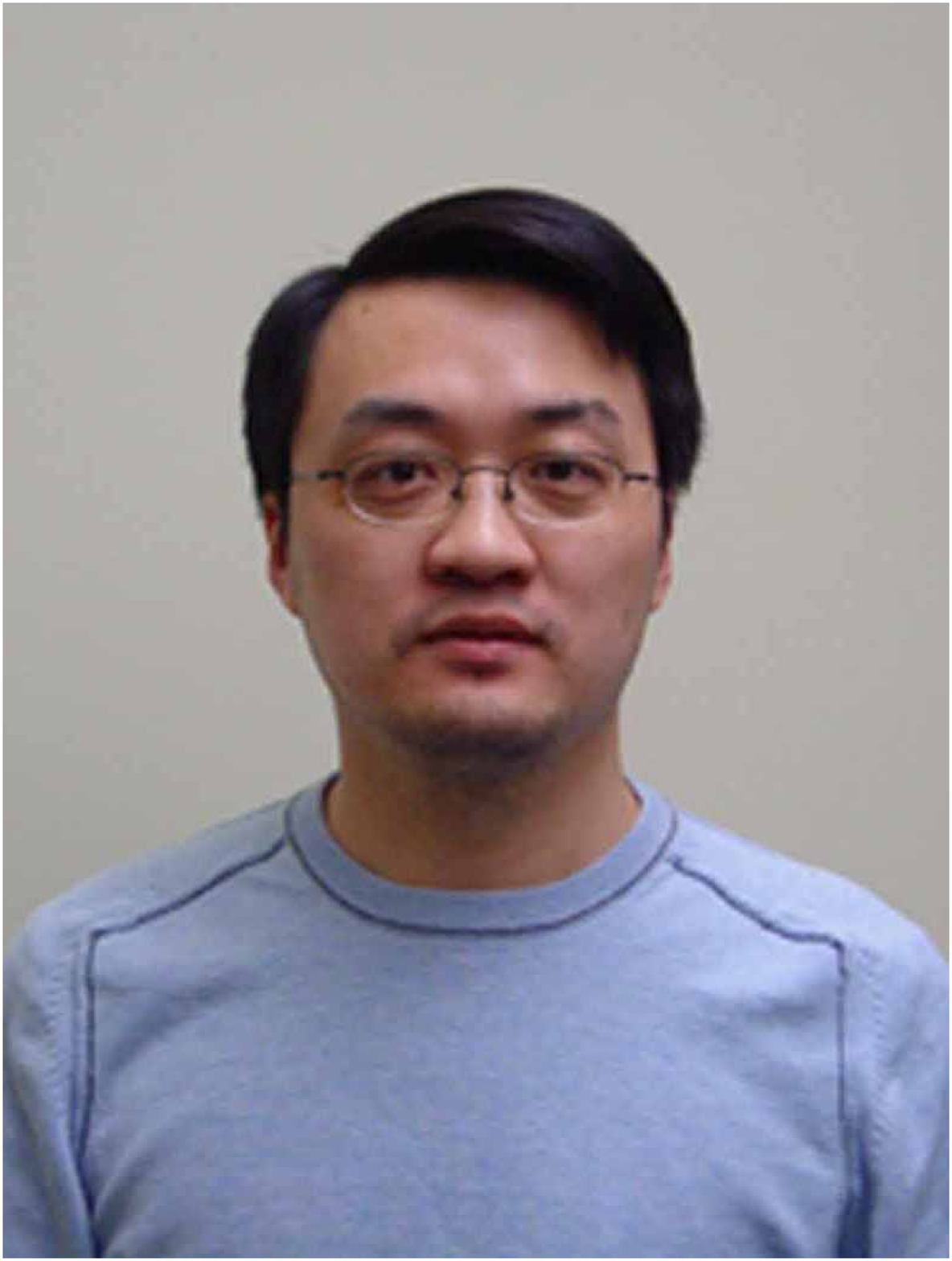}}]{Yunfei Chen}
(S’02-M’06-SM’10) received his B.E. and M.E. degrees in electronics engineering from Shanghai Jiaotong University, Shanghai, P.R.China, in 1998 and 2001, respectively. He received his Ph.D. degree from the University of Alberta in 2006. He is currently working as an Associate Professor at the University of Warwick, U.K. His research interests include wireless communications, cognitive radios, wireless relaying and energy harvesting.
\end{IEEEbiography}

\begin{IEEEbiography}[{\includegraphics[width=1in,height=1.25in,clip,keepaspectratio]{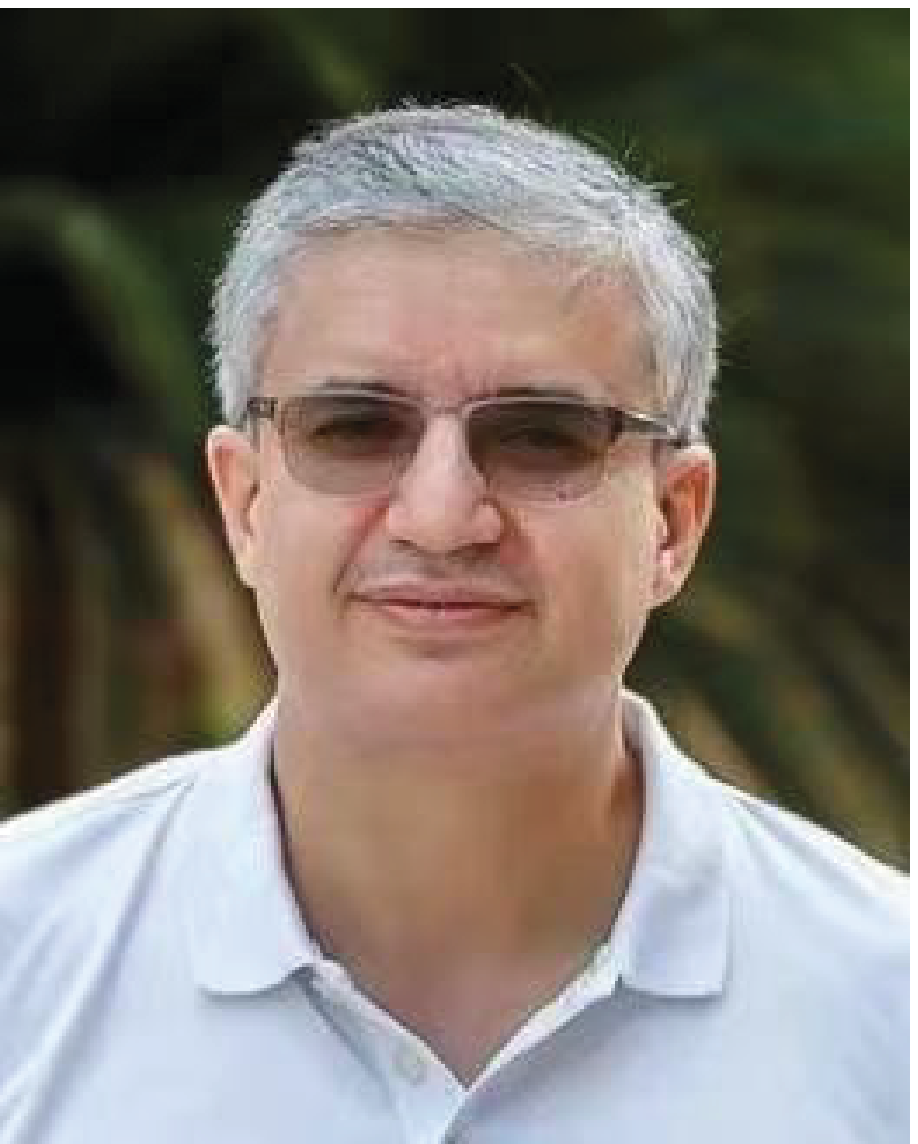}}]{Mohamed-Slim Alouini }(Fellow, IEEE) was born in Tunis, Tunisia. He received the Ph.D. degree in Electrical Engineering from the California Institute of Technology (Caltech), Pasadena, CA, USA, in 1998. He served as a faculty member in the University of Minnesota, Minneapolis, MN, USA, then in the Texas A\&M University at Qatar, Education City, Doha, Qatar before joining King Abdullah University of Science and Technology (KAUST), Thuwal, Makkah Province, Saudi Arabia as a Professor of Electrical Engineering in 2009. His current research interests include modeling, design, and performance analysis of wireless communication systems.
\end{IEEEbiography}

\end{document}